\newtheorem{theorem}{Theorem}[section]
\newtheorem{corollary}[theorem]{Corollary}
\newtheorem{lemma}[theorem]{Lemma}
\newtheorem{proposition}[theorem]{Proposition}
\theoremstyle{definition}
\theoremstyle{remark}
\numberwithin{equation}{section}
\begin{document}
\title[Sobolev Mapping Properties of the Scattering Transform]{Sobolev Mapping Properties of the Scattering Transform for the Schr\"{o}dinger Equation}
\author[Hryniv]{Rostyslav O. Hryniv}
\address[Hryniv]{Institute for Applied Problems of Mechanics and Mathematics, 3b
Naukova st., 79601 Lviv, Ukraine \and Department of Mechanics and Mathematics,
Lviv National University, 79602 Lviv, Ukraine \and Institute of Mathematics,
the University of Rzesz\'{o}w, 16\,A Rejtana str., 35-959 Rzesz\'{o}w, Poland}
\author[Mykytyuk]{Yaroslav V. Mykytyuk}
\address[Mykytyuk]{Department of Mechanics and Mathematics, Lviv National University,
79602 Lviv, Ukraine}
\author[Perry]{Peter A. Perry}
\address[Perry]{ Department of Mathematics, University of Kentucky, Lexington,
Kentucky, 40506-0027, U.S.A.}

\thanks{This paper is in final form and no version of it will be submitted for
publication elsewhere.}

\subjclass[2000]{Primary 34L25; Secondary 34L40, 47L10, 81U40}

\date{January 10, 2010}

\keywords{inverse scattering transform, nonlinear Fourier transform, Schr\"{o}dinger equation}

\dedicatory{Dedicated to Mikhail Shubin on the occasion of his sixty-fifth birthday}

\begin{abstract}
We consider the scattering transform for the Schr\"{o}dinger equation with a
singular potential and no bound states. Using the Riccati representation for
real-valued potentials on the line, we obtain invertibility and Lipschitz
continuity of the scattering transform between weighted and Sobolev spaces.
Our approach exploits the connection between scattering theory for the
Schr\"{o}dinger equation and scattering theory for the ZS--AKNS system.
\end{abstract}
\maketitle

\section{Introduction}

The purpose of this paper is to study Sobolev space mapping properties of the
direct and inverse scattering maps for the one-dimensional Schr\"{o}dinger
equation with a potential of low regularity and no bound states. One of our
motivations is to use the scattering maps for the Schr\"{o}dinger equation to
construct and study solutions of the KdV equation on the line with initial
data of low regularity using the inverse scattering method. This paper
presents first steps toward this goal which we will continue in
\cite{HMP:2011}.

In this paper, we will describe a new representation for singular potentials
on the line, the Riccati representation, inspired by the work of Kappeler,
Perry, Shubin, and Topalov \cite{KPST:2005} on the Miura map \cite{Miura:1968}%
. As we will see, the Sobolev mapping properties of the scattering map are
particularly transparent when this representation is used. An analogous
representation for Schr\"{o}dinger operators on the circle appears in the work
of Kappeler and Topalov \cite{KT:2003,KT:2005a,KT:2005b,KT:2006} on well-posedness of the periodic KdV and mKdV equations.

If $q$ is a real-valued distribution on the real line belonging to the space
$H^{-1}(\mathbb{R})$, the Schr\"{o}dinger operator $-d^{2}/dx^{2}+q$ may be
defined as the self-adjoint operator associated to the closure of the
semibounded quadratic form
\begin{equation}
\mathfrak{q}(\varphi)=\int\left\vert \varphi^{\prime}(x)\right\vert
^{2}dx+\left\langle q,\left\vert \varphi\right\vert ^{2}\right\rangle
\label{eq:qf}%
\end{equation}
with domain $C_{0}^{\infty}(\mathbb{R})$ (see Appendix B in \cite{KPST:2005}
and references therein). It is natural to begin by considering such singular
potentials without negative-energy bound states, i.e., distributions $q$ for
which the quadratic form (\ref{eq:qf}) is non-negative. As shown in
\cite{KPST:2005}, such a distribution can be presented in the form
\[
q=u^{\prime}+u^{2}%
\]
where $u\in L_{\mathrm{loc}}^{2}(\mathbb{R})$ is the logarithmic derivative of
a positive solution $y\in H_{\mathrm{loc}}^{1}(\mathbb{R})$ of the zero-energy
Schr\"{o}dinger equation $-y^{\prime\prime}+qy=0$. The function $u$ is called
a \emph{Riccati representative} for the distribution $q$.

There is a one-to-one correspondence between Riccati representatives $u$ and
strictly positive solutions $y$ to the zero-energy Schr\"{o}dinger equation,
normalized so that $y(0)=1$. This latter set consists either of a single point
or a one-parameter family of solutions. Explicitly, $y=\theta y_{-}
+(1-\theta)y_{+}$, where $y_{\pm}$ are the unique, normalized, positive
solutions with the property that
\[
      \int_{0}^{\infty}\frac{ds}{y^{2}_{+}(s)}
    = \int_{-\infty}^{0}\frac{ds}{y^{2}_{-}(s)}
    = \infty
\]
(see \S 5 of \cite{KPST:2005}). If we set $u_{\pm}=\frac{d}{dx}\log y_{\pm}$,
these \textquotedblleft extremal\textquotedblright\ Riccati representatives
$u_{\pm}$ have the property that $v:=u_{-}-u_{+}$ is a nonnegative, H\"{o}lder
continuous function and is either strictly positive, if $u_{+}\neq u_{-}$, or
identically zero, if $u_{+}=u_{-}$.

We can now describe the class of potentials we will study and define the
Riccati representation for such potentials that will play a central role in
our work. Denote by $\mathcal{Q}$ the set of real-valued distributions $q\in
H^{-1}(\mathbb{R})$ with the properties that

(i) the quadratic form (\ref{eq:qf}) is non-negative, and

(ii) the Riccati representatives $u_{\pm}$ obey $u_{\pm}\in L^{1}%
(\mathbb{R}^{\pm})\cap L^{2}(\mathbb{R})$.\newline We have $\mathcal{Q}%
=\mathcal{Q}_{0}\cup\mathcal{Q}_{>}$ where $\mathcal{Q}_{0}$ is the set of all
$q\in Q$ with $v(0)=0$, and $\mathcal{Q}_{>}$ is the set of all such
distributions with $v(0)>0$. This class includes the usual Faddeev--Marchenko
class\footnote{That is, real-valued measurable functions $q$ with $\int\left(
1+\left\vert x\right\vert \right)  \left\vert q(x)\right\vert ~dx<\infty$.}
but also positive measures with suitable decay, certain highly oscillating
potentials, and sums of delta functions with positive weight (see \S 1 of
\cite{FHMP:2009} and \S 2 of \cite{HMP:2010} for further examples). The set
$\mathcal{Q}_{0}$ is very unstable under perturbations so that potentials in
the sets $\mathcal{Q}_{0}$ and $\mathcal{Q}_{>}$ are referred to respectively
as \textquotedblleft exceptional\textquotedblright\ and \textquotedblleft
generic\textquotedblright\ potentials.

A distribution $q\in\mathcal{Q}$ is uniquely determined by the data%

\begin{equation}
\left(  \left.  u_{-}\right\vert _{(-\infty,0)},\left.  u_{+}\right\vert
_{(0,\infty)},v(0)\right)  \in X^{-}\times X^{+}\times\lbrack0,\infty
),\label{eq:vbl.riccati}%
\end{equation}
where $X^{\pm}=L^{2}(\mathbb{R}^{\pm})\cap L^{1}(\mathbb{R}^{\pm})$ (see
\cite{HMP:2010}, Lemma 2.3). We will call the triple $\left(  \left.
u_{-}\right\vert _{(-\infty,0)},\left.  u_{+}\right\vert _{(0,\infty
)},v(0)\right)  $ the \emph{Riccati representation} of $q$. Note that
$q\in\mathcal{Q}_{0}$ has a unique Riccati representative $u\in L^{1}%
(\mathbb{R})\cap L^{2}(\mathbb{R})$.

For $q\in\mathcal{Q}$, it was shown in \cite{FHMP:2009} (for $q\in
\mathcal{Q}_{0}$) and \cite{HMP:2010} (for $q\in\mathcal{Q}_{>}$) that the
usual formulation of scattering theory for the Schr\"{o}dinger equation
carries over. First, there exist Jost solutions $f_{\pm}(x,k)$, asymptotic as
$x\rightarrow\pm\infty$  to $\exp\left(  \pm ikx\right)  $. Second, one can
use these solutions to define reflection coefficients $r_{\pm}(k)$ that
describe scattering. The scattering maps $\mathcal{S}_{\pm}\,$\ are then
defined as%
\[
\mathcal{S}_{\pm}:q \mapsto r_{\pm}.
\]
We will study the scattering maps, parameterizing their domain using the
Riccati representation.

The Riccati representation connects the scattering problem for the
Schr\"{o}dinger equation to the scattering problem for the ZS--AKNS\ system
(see Zakharov--Shabat \cite{ZS:1971} and Ablowitz--Kaup--Newell--Segur
\cite{AKNS:1974}):%
\begin{equation}
\frac{d}{dx}\Psi=ik\sigma_{3}\Psi+Q(x)\Psi,\label{eq:AKNS}%
\end{equation}
where%
\[
\sigma_{3}=\left(
\begin{array}
[c]{cc}%
1 & 0\\
0 & -1
\end{array}
\right)
\]
and
\begin{equation}
Q(x)=\left(
\begin{array}
[c]{cc}%
0 & u(x)\\
u(x) & 0
\end{array}
\right)  ,\label{eq:Q}%
\end{equation}
where $u$ is a Riccati representative for $q$. If $q\in\mathcal{Q}_{0},$ then
the Schr\"{o}dinger scattering problem is in fact equivalent to the scattering
problem for (\ref{eq:AKNS}) with potential (\ref{eq:Q}), and the scattering
maps can be studied using techniques developed for the ZS--AKNS\ system (see
\cite{Frayer:2008} and \cite{FHMP:2009}). On the other hand, if $q\in
\mathcal{Q}_{>}$, one can construct Jost solutions $f_{+}$ and $f_{-}$ for the
Schr\"{o}dinger equation from scattering solutions associated to
ZS--AKNS\ systems (\ref{eq:AKNS}), where the potential $Q$ is given by
(\ref{eq:Q}) respectively with $u=u_{+}$ and $u=u_{-}$.

The Riccati representation is particularly well-suited to studying Sobolev
space mapping properties of the scattering map. We first consider the case of
$q\in\mathcal{Q}_{0}$, where $q$ is specified uniquely by a single real-valued
Riccati representative $u\in X$, with $X$ denoting the Banach space
$L^{1}(\mathbb{R})\cap L^{2}(\mathbb{R})$ with norm
\[
\left\Vert u\right\Vert _{X}=\left\Vert u\right\Vert _{L^{1}(\mathbb{R}%
)}+\left\Vert u\right\Vert _{L^{2}(\mathbb{R})}.
\]
We will write $X_{\mathbb{R}}$ for the real Banach space of real-valued
functions $u\in X$. Denote by $\widehat{X}$ and $\widehat{X_{\mathbb{R}}}$ the
images of $X$ and $X_{\mathbb{R}}$ under the Fourier transform, set
$\left\Vert \widehat{v}\right\Vert _{\widehat{X}}=\left\Vert v\right\Vert
_{X}$, and let%
\[
\widehat{X}_{1}:=\left\{  r\in\widehat{X_{\mathbb{R}}}:\left\Vert r\right\Vert
_{\infty}<1\right\}  .
\]
Note that $r(-k)=\overline{r(k)}$ for any $r\in\widehat{X_{\mathbb{R}}}$. It
was shown in \cite{Frayer:2008}, \cite{FHMP:2009} that the scattering maps
$\mathcal{S}_{\pm}$ are invertible, locally bi-Lipschitz maps from
$X_{\mathbb{R}}$ onto $\widehat{X}_{1}$. Since the maps $\mathcal{S}_{\pm}$ in
the Riccati variable are scattering maps for the ZS--AKNS\ system, one can use
techniques of Zhou \cite{Zhou:1998} to prove the following refined Sobolev
mapping property. For $s\geq0$, let
\[
L^{2,s}(\mathbb{R}):=\left\{  u\in L^{2}(\mathbb{R}):\left(  1+\left\vert
x\right\vert \right)  ^{s}u\in L^{2}(\mathbb{R})\right\}
\]
and denote by $H^{s}(\mathbb{R})$ the image of $L^{2,s}(\mathbb{R})$ under the
Fourier transform. Note that, for $s>1/2$, $L^{2,s}(\mathbb{R})\subset X$ and
$H^{s}(\mathbb{R})$ consists of continuous functions. If we set%
\[
H_{1}^{s}(\mathbb{R}):=\left\{  r\in H^{s}(\mathbb{R})\cap\widehat
{X_{\mathbb{R}}}:\left\Vert r\right\Vert _{\infty}<1\right\}  ,
\]
one can prove the following refined mapping property.

\begin{theorem}
\label{thm:except}For any $s>1/2$, the restrictions $\mathcal{S}_{\pm}%
:L^{2,s}(\mathbb{R})\cap X_{\mathbb{R}}\rightarrow H_{1}^{s}(\mathbb{R})$ are
onto, invertible, locally bi-Lipschitz continuous maps.
\end{theorem}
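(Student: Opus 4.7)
The plan is to reduce the statement to one about the ZS--AKNS system via the Riccati representation. For $q\in\mathcal{Q}_{0}$ the Schr\"odinger scattering problem is equivalent to the ZS--AKNS problem (\ref{eq:AKNS}) with $Q$ given by (\ref{eq:Q}) and $u$ the unique Riccati representative, so $\mathcal{S}_{\pm}$ factors as the composition of the Riccati map $q\mapsto u$ with the ZS--AKNS reflection map $u\mapsto r_{\pm}$. On $X_{\mathbb{R}}$ the latter map is already known from \cite{Frayer:2008,FHMP:2009} to be an onto, invertible, locally bi-Lipschitz map to $\widehat{X}_{1}$; the remaining task is to upgrade this statement under the weighted regularity constraint $u\in L^{2,s}$.

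For the forward direction I would analyze the Faddeev-normalized Jost solutions $m_{\pm}(x,k)=\Psi_{\pm}(x,k)e^{-ikx\sigma_{3}}$ through their Volterra integral equations, yielding a representation of $r_{\pm}(k)$ whose leading term is a Fourier transform of $u$ and whose higher-order terms are iterated integrals in $u$ with oscillating kernels $e^{\mp 2ikx_{j}}$. Under the Fourier transform, the pointwise multiplier $(1+|x|)^{s}$ becomes the Bessel potential producing $H^{s}$, so weighted decay of $u$ translates directly into Sobolev regularity of the leading term; for the higher-order terms one uses the algebra property of $H^{s}$ for $s>1/2$ together with the absolute convergence of the Neumann series in $X$. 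This is the strategy developed by Zhou \cite{Zhou:1998} for the ZS--AKNS reflection coefficient in weighted spaces, and it carries over here essentially verbatim because the ZS--AKNS potential (\ref{eq:Q}) is off-diagonal and real.

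For the inverse direction I would solve the Riemann--Hilbert problem on $\mathbb{R}$ with jump matrix built from $r_{\pm}$. The hypothesis $\|r\|_{\infty}<1$ gives uniform invertibility of the jump and hence unique solvability of the RH problem in $L^{2}$, with the solution depending Lipschitz-continuously on the data in the $\widehat{X}$ norm. To recover $u\in L^{2,s}$ one reads $u$ off the large-$k$ expansion of the RH solution and tracks how $k$-derivatives of $r$ translate, through the Cauchy projector, into polynomial $x$-weights on the spatial side. Quantifying this transfer uniformly on bounded subsets of $H^{s}_{1}$ bounded away from the forbidden boundary $\|r\|_{\infty}=1$ is the step I expect to be most delicate; it is essentially a Sobolev--weighted refinement of the Beals--Coifman continuity estimates that underlie the $s=0$ case.

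Once both directions are in place, local bi-Lipschitz continuity follows by applying the same estimates to differences: both the Jost construction and the RH reconstruction depend smoothly on their data, with Fr\'echet derivatives bounded by the same Sobolev/weighted quantities, yielding a modulus of continuity that is uniform on bounded subsets of $L^{2,s}\cap X_{\mathbb{R}}$, respectively on subsets of $H^{s}_{1}$ at positive distance from $\|r\|_{\infty}=1$. The principal obstacle is thus the uniform $H^{s}\to L^{2,s}$ transfer in the inverse direction near the boundary of the admissible set of reflection coefficients.
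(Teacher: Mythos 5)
Your proposal is correct in outline, and the opening reduction --- factor $\mathcal{S}_{\pm}$ through the Riccati map and treat $u\mapsto r_{\pm}$ as a ZS--AKNS scattering map, then upgrade the known $X_{\mathbb{R}}\to\widehat{X}_{1}$ bi-Lipschitz result to the weighted/Sobolev pair --- is exactly the route the paper indicates (it in fact only sketches this theorem, citing Zhou, and devotes its detailed work to the generic case; but the machinery of its Sections 3--4 specializes verbatim to the exceptional case by setting $u_{+}=u_{-}$ and $v(0)=0$). Your forward direction coincides with the paper's: multilinear (Neumann/Volterra) expansions of the normalized Jost solutions, weighted $L^{2,s}$ estimates on each multilinear term controlled by $\Vert u\Vert_{L^{1}}^{n}/n!$ times $\Vert u\Vert_{L^{2,s}}$, and the Banach-algebra property of $1\dotplus H^{s}$ for $s>1/2$ to pass from $\widetilde a$, $\widetilde b$ to $r_{\pm}$. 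The inverse direction is where you genuinely diverge: you propose the Riemann--Hilbert/Beals--Coifman route (which is Zhou's), reading $u$ off the large-$k$ expansion and transferring $k$-regularity of $r$ to $x$-decay of $u$ through the Cauchy projector, whereas the paper solves the Gelfand--Levitan--Marchenko equations and estimates the resulting multilinear series $G=\sum G_{n}$ directly in $L^{2,s}(x_{0},\infty)$ for $x_{0}$ large. The Marchenko route buys two things you should note. First, it sidesteps entirely the step you single out as most delicate: since only \emph{local} bi-Lipschitz continuity is claimed, every $r\in H^{s}_{1}$ has a neighborhood on which $\Vert r\Vert_{\infty}$ is bounded away from $1$, so there is no uniform-up-to-the-boundary transfer to establish; in the Marchenko picture this is reflected in the fact that the Neumann series for $(I-T_{F}^{2})^{-1}$ converges once $\int_{x_{0}}^{\infty}|F|<1/2$, a condition stable under small perturbations of $r$ in $H^{s}$. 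Second, the weighted estimates for fractional $s$ come out of elementary $L^{1}$/$L^{2}$ bounds on iterated integrals rather than from commuting fractional weights past the Cauchy projector, which is the technically touchier part of the RH approach when $s$ is not an integer. Your route is viable (it is essentially Zhou's theorem adapted to the symmetric off-diagonal potential), but if you carry it out you should make the fractional-weight transfer through the Cauchy operator explicit rather than asserting it carries over "essentially verbatim."
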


We will not give the details of the proof but rather concentrate on the more
challenging case where $q\in\mathcal{Q}_{>}$. To formulate our main theorem we
first recall some results from \cite{HMP:2010}.

If $q\in\mathcal{Q}_{>}$, the reflection coefficients $r_{\pm}$ belong to
$\widehat{X_{\mathbb{R}}}$ , but $r_{\pm}(0)=-1$ and $\left\vert
r(k)\right\vert <1$ for $k\neq0$. For smooth, compactly supported generic
potentials, one has $r_{\pm}(k)=-1+\mathcal{O}(k^{2})$ as $k\rightarrow0$
(see, for example, \cite{DT:1979}, \S 2, Theorem 1, Part V and Remark 9); in
general, as shown in \cite{HMP:2010}, one has the weaker condition that the
functions%
\[
\frac{1-\left\vert r_{\pm}(k)\right\vert ^{2}}{k^{2}}%
\]
belong to $\widehat{X_{\mathbb{R}}}$ and do not vanish at $k=0$. The direct
scattering maps in the Riccati variables are given by%
\[
\mathcal{S}_{\pm}:\left(  \left.  u_{-}\right\vert _{(-\infty,0)},\left.
u_{+}\right\vert _{(0,\infty)},v(0)\right)  \mapsto r_{\pm}.
\]
For $r\in\widehat{X_{\mathbb{R}}}$, we shall write%
\begin{equation}
\widetilde{r}(k)=\frac{1-\left\vert r(k)\right\vert ^{2}}{k^{2}}%
\label{eq:tilder-def}%
\end{equation}
and denote%
\[
\mathcal{R}_{>}:=\left\{  r\in\widehat{X_{\mathbb{R}}}:r(0)=-1,~\left\vert
r(k)\right\vert <1\text{ if }k\neq0\text{, }~\widetilde{r}\in\widehat
{X_{\mathbb{R}}},~\widetilde{r}(0)\neq0\right\}  .
\]
The space $\mathcal{R}_{>}$ is a metric space when equipped with the metric
\begin{equation}
d\left(  r_{1},r_{2}\right)  =\left\Vert r_{1}-r_{2}\right\Vert _{\widehat{X}%
}+\left\Vert \widetilde{r}_{1}-\widetilde{r}_{2}\right\Vert _{\widehat{X}%
}.\label{eq:dx}%
\end{equation}
In \cite{HMP:2010}, it was shown that the maps $\mathcal{S}_{\pm}$ are locally
bi-Lipschitz continuous onto maps from $X^{-}\times X^{+}\times(0,\infty)$
onto $\mathcal{R}_{>}$ equipped with the metric (\ref{eq:dx}).

We will prove a finer mapping property, analogous to Theorem \ref{thm:except},
for the scattering map on generic potentials. We set%
\[
\mathcal{R}_{s}=\left\{  r\in\mathcal{R}_{>}\cap H^{s}(\mathbb{R}%
):\widetilde{r}\in H^{s}(\mathbb{R})\right\}
\]
and equip $\mathcal{R}_{s}$ with the metric%
\[
d_{s}\left(  r_{1},r_{2}\right)  =\left\Vert r_{1}-r_{2}\right\Vert
_{H^{s}(\mathbb{R})}+\left\Vert \widetilde{r}_{1}-\widetilde{r}_{2}\right\Vert
_{H^{s}(\mathbb{R})}.
\]

\begin{theorem}
\label{thm:main}For any $s>1/2$, the direct scattering maps $\mathcal{S}_{\pm
}$ are invertible, locally bi-Lipschitz continuous maps from $L^{2,s}%
(\mathbb{R}^{-})\times L^{2,s}(\mathbb{R}^{+})\times(0,\infty)$ onto the space
$\mathcal{R}_{s}$.
\end{theorem}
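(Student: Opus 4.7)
The plan is to reduce Theorem \ref{thm:main} to two ingredients: the algebraic formulas from \cite{HMP:2010} that express the Schr\"odinger scattering data for $q \in \mathcal{Q}_>$ in terms of two auxiliary half-line ZS--AKNS scattering problems, together with the $H^s$-refinement of the ZS--AKNS correspondence already stated as Theorem \ref{thm:except}.

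First, I would recall the construction of \cite{HMP:2010}: for a generic potential with Riccati data $(u_-|_{(-\infty,0)}, u_+|_{(0,\infty)}, v(0))$, the Schr\"odinger Jost solutions $f_\pm$ are built from Jost solutions of the ZS--AKNS systems (\ref{eq:AKNS}) with potentials (\ref{eq:Q}) formed from $u_+$ and $u_-$, respectively. Extending $u_+$ by zero on $(-\infty,0)$ and $u_-$ by zero on $(0,\infty)$, we obtain two elements of $X_\mathbb{R}$ to which Theorem \ref{thm:except} applies directly, producing ZS--AKNS reflection coefficients $\rho_\pm \in H_1^s(\mathbb{R})$ whenever $u_\pm \in L^{2,s}(\mathbb{R}^\pm)$, with a locally bi-Lipschitz bijection between the two sides. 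The matching of the two constructions at $x = 0$ involves $v(0)$ and yields explicit algebraic formulas
\[
r_\pm = \Phi_\pm\bigl(\rho_+,\rho_-,v(0)\bigr), \qquad \tilde r_\pm = \Psi_\pm\bigl(\rho_+,\rho_-,v(0)\bigr),
\]
where $\Phi_\pm$ and $\Psi_\pm$ are rational combinations whose denominators are bounded below by a positive quantity on compact subsets of the data.

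Next, I would verify that the assembly map $(\rho_-,\rho_+,v(0)) \longmapsto (r_-,r_+)$ and its inverse are locally bi-Lipschitz in the metric $d_s$ on $\mathcal{R}_s$. Because $H^s(\mathbb{R})$ is a Banach algebra for $s > 1/2$ and pointwise inversion is locally Lipschitz for functions bounded below, rational combinations with controlled denominators are locally Lipschitz operations from $H^s$ to $H^s$. The assembly formulas must moreover exhibit the factorization $1 - |r_\pm(k)|^2 = k^2 \tilde r_\pm(k)$ explicitly, with $\tilde r_\pm$ given by a rational expression of the same form as above; once this is established, $\tilde r_\pm \in H^s$ and $\tilde r_\pm(0) \neq 0$ follow immediately. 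Composing this algebraic step with the ZS--AKNS correspondence of Theorem \ref{thm:except} then yields the asserted onto, invertible, locally bi-Lipschitz scattering map. The inverse direction proceeds symmetrically: from $r_\pm \in \mathcal{R}_s$ one recovers $\rho_+, \rho_-, v(0)$ via $\Phi_\pm^{-1}$, and the inverse ZS--AKNS correspondence returns the Riccati data $u_\pm$.

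The main obstacle is the precise cancellation at $k = 0$ that is needed to place $\tilde r_\pm$ in $H^s$ and to make its dependence on the data Lipschitz. Regularity of $r_\pm$ in $H^s$ is immediate from the Banach algebra property of $H^s$, but to obtain $\tilde r_\pm \in H^s$ one must extract an exact factor of $k^2$ from $1 - |r_\pm|^2$ using the algebraic form of $\Phi_\pm$, without losing Sobolev regularity and uniformly as $v(0)$ varies in compact subsets of $(0,\infty)$. Dually, on the inverse side, the condition $\tilde r_\pm(0) \neq 0$ is precisely what guarantees that the recovery of $v(0)$ from $(r_+, r_-)$ is a well-posed, Lipschitz operation. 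Establishing these two algebraic/analytic facts with quantitative, uniform bounds is the technical heart of the argument.
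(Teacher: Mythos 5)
Your high-level architecture (a direct estimate, an inverse estimate, and bijectivity imported from the $X$-level theory of \cite{HMP:2010}) matches the paper's, and you correctly identify the $k^{2}$-cancellation at the origin and the recovery of $v(0)$ as the technical heart. However, the specific reduction you propose has a genuine gap. The Schr\"odinger coefficients $a$ and $b$ are assembled, via \eqref{eq:a}--\eqref{eq:b} evaluated at $x=0$, from the \emph{Jost matrix entries} $n_{11}^{\pm}(0,k)$, $n_{21}^{\pm}(0,k)$ and $f_{\pm}(0,k)$ of the two half-line ZS--AKNS problems, not from their reflection coefficients $\rho_{\pm}$. The passage from $\rho_{\pm}$ back to these Jost entries is not a ``rational combination with controlled denominators'': it requires reconstructing $n_{11}^{\pm}(0,\,\cdot\,)$ as the outer function with modulus $(1-|\rho_{\pm}|^{2})^{-1/2}$, i.e.\ a Szeg\H{o}/Hilbert-transform construction whose Lipschitz continuity on $H^{s}$ is itself a nontrivial claim (compare the outer-function formula for $t(z)$ at the end of Section~\ref{sec:direct}). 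The paper avoids this entirely: Section~\ref{sec:pre} proves $H^{s}$ bounds and analytic dependence for $n_{11}^{\pm}(0,\,\cdot\,)-1$ and $n_{21}^{\pm}(0,\,\cdot\,)$ directly from multilinear expansions in $u_{\pm}$ (Propositions~\ref{prop:n+.Hs} and~\ref{prop:n-.Hs}), and then controls the $1/k$ singularity by passing to $\widetilde a=\tfrac{k}{k+i}a$ and $\widetilde b=\tfrac{k}{k+i}b$ and inverting $\widetilde a$ in the Banach algebra $1\dotplus H^{s}$ via Lemma~\ref{lem:spectrum}, the key point being $\widetilde a(0)\neq 0$, which is exactly where $v(0)>0$ enters. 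Your appeal to Theorem~\ref{thm:except} cannot substitute for this step, also because the reflection coefficients of zero-extended half-line potentials form a proper, analytically constrained subset of $H_{1}^{s}(\mathbb{R})$, so the correspondence $u_{\pm}|_{\mathbb{R}^{\pm}}\leftrightarrow\rho_{\pm}$ you invoke is not the bijection asserted in that theorem.

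The second gap is the inverse direction and surjectivity. Writing ``recover $\rho_{+},\rho_{-},v(0)$ via $\Phi_{\pm}^{-1}$'' hides the actual difficulty: splitting a single full-line datum $r\in\mathcal{R}_{s}$ into two half-line scattering data plus $v(0)$ is a factorization problem, and the paper instead solves the left and right Gelfand--Levitan--Marchenko equations with kernels built from $r$ and $r^{\#}=\mathcal{I}_{s}r$ (this requires the continuity of the involution $\mathcal{I}_{s}$ established at the end of Section~\ref{sec:direct}, which you never invoke) and proves the weighted estimate \eqref{eq:Gn.bound} on the multilinear expansion of the resolvent series for large $|x|$ to conclude $u\in L^{2,s}(\mathbb{R}^{+})$ and $u^{\#}\in L^{2,s}(\mathbb{R}^{-})$ with Lipschitz dependence. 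Without an argument of this kind your proof establishes neither that $\mathcal{S}_{\pm}$ maps onto $\mathcal{R}_{s}$ nor that the inverse is locally Lipschitz in the metric $d_{s}$.
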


Fourier-type mapping properties of the map $q\mapsto r$ have been studied by
many authors, including Cohen \cite{Cohen:1982}, Deift and Trubowitz
\cite{DT:1979}, and Faddeev \cite{Faddeev:1964}. These authors impose weighted
$L^{1}$ assumptions on $q$ and obtain regularity results for $r$ in terms of
$\infty$-norms of $r$ and its derivatives. Kappeler and Trubowitz
\cite{KT:1986}, \cite{KT:1988} studied Sobolev space mapping properties of the
scattering map, defined as follows. Let $s(k)=2ikr(k)/t(k)$, where $r$ is the
reflection coefficient and $t$ is the transmission coefficient, and introduce
the weighted Sobolev spaces
\begin{align*}
H_{n,\alpha} &  =\left\{  f\in L^{2}:x^{\beta}\partial_{x}^{j}f\in
L^{2}\text{, }0\leq j\leq n,~0\leq\beta\leq\alpha\right\}  ,\\
H_{n,\alpha}^{\#} &  =\left\{  f\in H_{n,\alpha}:x^{\beta}\partial_{x}%
^{n+1}f\in L^{2},~1\leq\beta\leq\alpha\right\}  .
\end{align*}
Kappeler and Trubowitz show that the map $q\mapsto s$ takes potentials $q\in
H_{N,N}$ without bound states to scattering functions $s$ belonging to
$H_{N-1,N}^{\#}$ for $N\geq3$. They extend their results to potentials with
finitely many bound states in \cite{KT:1988}. They also prove analyticity and
investigate the differential of the scattering map.

Our results are similar to those of Kappeler and Trubowitz in that we study
$L^{2}$-based Sobolev spaces, which leads to a more symmetrical formulation of
the mapping properties. In our case, we examine the scattering map in the
Riccati variables (\ref{eq:vbl.riccati}) and so treat potentials which are
more singular than the class treated by Kappeler and Trubowitz. In a
subsequent paper \cite{HMP:2011}, we will extend the methods developed here to
consider mapping properties between weighted fractional Sobolev spaces which
preserve the KdV flow.

This paper is organized as follows. In section 2, we first review the
connection between Jost solutions to the Schr\"{o}dinger and
ZS--AKNS\ equations. In section 3 we obtain estimates on the direct scattering
map using a Fourier representation for the Jost solutions derived in
\cite{FHMP:2009}. In section 4, we use the representation formulas of
\cite{HMP:2010}, derived from Gelfand--Levitan--Marchenko equation for the
ZS--AKNS\ system, to analyze the inverse scattering map. Finally, in section
5, we give the proof of the main theorem.

\textbf{Acknowledgements}. The research of RH and YM was partially supported
by Deutsche Forschungsgemeinschaft under project 436 UKR 113/84. RH\ was
supported in part by NSF grant DMS-0408419, and PP\ was supported in part by
NSF grants DMS-0408419 and DMS-0710477.

\section{Schr\"{o}dinger Scattering and the ZS--AKNS System}

\label{sec:pre}

In this section, we recall how the Jost solutions and reflection coefficients
for a Schr\"{o}dinger operator with Miura potential may be computed by solving
the associated ZS--AKNS\ equations with potentials $u_{+}$ and $u_{-}$. We
assume throughout that $u_{\pm}\in L^{2}(\mathbb{R})\cap L^{1}(\mathbb{R}%
^{\pm})$ are real-valued.

First, we recall the connection between the Schr\"{o}dinger equation with a
Miura potential and the ZS--AKNS\ system. If $u\in L_{\mathrm{loc}}%
^{2}(\mathbb{R})$ and $q=u^{\prime}+u^{2}$ then the Schr\"{o}dinger equation
\begin{equation}
-y^{\prime\prime}+qy=k^{2}y\label{eq:se}%
\end{equation}
is equivalent to the system%
\begin{equation}
\frac{d}{dx}\left(
\begin{array}
[c]{c}%
y\\
y^{\left[  1\right]  }%
\end{array}
\right)  =\left(
\begin{array}
[c]{cc}%
u & 1\\
-k^{2} & -u
\end{array}
\right)  \left(
\begin{array}
[c]{c}%
y\\
y^{\left[  1\right]  }%
\end{array}
\right)  \label{eq:se.sys}%
\end{equation}
where $y^{\left[  1\right]  }:=y^{\prime}-uy$ is the quasi-derivative of $y$.
Note that $y$ and $y^{\left[  1\right]  }$ are absolutely continuous, and the
initial value problem for (\ref{eq:se.sys}) has a unique solution. For a given
choice of $u$ and solutions $g$ and $h$ of (\ref{eq:se}), the Wronskian%
\begin{equation}
\left[  f,g\right]  =g(x)h^{\left[  1\right]  }(x)-g^{\left[  1\right]
}(x)h(x)\label{eq:Wronski}%
\end{equation}
is independent of $x$.

The Jost solutions $f_{\pm}(x,k)$ satisfy (\ref{eq:se}) with respective
asymptotic conditions%
\begin{equation}
\lim_{x\rightarrow\pm\infty}\left(
\begin{array}
[c]{c}%
f_{\pm}(x)-e^{\pm ikx}\\
f_{\pm}^{\left[  1\right]  }(x)\mp ike^{\pm ikx}%
\end{array}
\right)  =0\label{eq:fpm.asy}%
\end{equation}
where%
\[
f_{\pm}^{\left[  1\right]  }:=f_{\pm}-u_{\pm}f_{\pm}.
\]
If $\left[  ~\cdot~,~\cdot~\right]  _{\pm}$ denotes the Wronskian
(\ref{eq:Wronski}) with $u=u_{\pm}$, it follows from the asymptotics
(\ref{eq:fpm.asy}) that
\[
-\left[  f_{+}(x,k),f_{+}(x,-k)\right]  _{+}=\left[  f_{-}(x,k),f_{-}%
(x,-k)\right]  _{-}=2ik.
\]
Thus, for $k\neq0$, there are coefficients $a(k)$ and $b(k)$ so that%
\[
f_{+}(x,k)=a(k)f_{-}(x,-k)+b(k)f_{-}(x,k).
\]
By standard arguments,%
\begin{equation}
\left\vert a(k)\right\vert ^{2}-\left\vert b(k)\right\vert ^{2}%
=1,\label{eq:ab1}%
\end{equation}
and the reality conditions%
\begin{equation}
a(-k)=\overline{a(k)},%
\qquad
b(-k)=\overline{b(k)}\label{eq:abr}%
\end{equation}
hold. Moreover,%
\begin{equation}
a(k)=\frac{\left[  f_{+}(x,k),f_{-}(x,k)\right]  _{-}}{\left[  f_{-}%
(x,-k),f_{-}(x,k)\right]  _{-}}\label{eq:a.wronski}%
\end{equation}
and%
\begin{equation}
b(k)=\frac{\left[  f_{+}(x,k),f_{-}(x,-k)\right]  _{-}}{\left[  f_{-}%
(x,k),f_{-}(x,-k)\right]  _{-}}.\label{eq:b.wronski}%
\end{equation}
The reflection coefficients $r_{\pm}$ are given by%
\begin{align}
r_{-}(k) &  =b(k)/a(k),\label{eq:r-}\\
r_{+}(k) &  =-b(-k)/a(k),\label{eq:r+}%
\end{align}
so that $\left\vert r_{+}(k)\right\vert =\left\vert r_{-}(k)\right\vert $.
\ The transmission coefficient is given by $t(k)=1/a(k)$, and the involution%
\begin{equation}
r(k)\mapsto-\frac{t(k)}{t(-k)}r(-k)\label{eq:inv}%
\end{equation}
maps $r_{-}$ to $r_{+}$ and vice versa.

To compute the Jost solutions $f_{\pm}$ we exploit the following connection
between the Schr\"{o}dinger equation with potential $q=u^{\prime}+u^{2}$ and
the ZS--AKNS\ system%
\begin{equation}
\frac{d}{dx}\Psi=ik\sigma_{3}\Psi+Q(x)\Psi\label{eq:ZS-AKNS}%
\end{equation}
with potential%
\begin{equation}
Q(x)=\left(
\begin{array}
[c]{cc}%
0 & u(x)\\
u(x) & 0
\end{array}
\right)  .\label{eq:Qu}%
\end{equation}
If $\Psi=(\psi_{1},\psi_{2})^{T}$ is a vector-valued solution of
(\ref{eq:ZS-AKNS}) with potential (\ref{eq:Qu}), then
\[
\left(
\begin{array}
[c]{c}%
\psi_{1}+\psi_{2}\\
ik(\psi_{1}-\psi_{2})
\end{array}
\right)
\]
solves the system (\ref{eq:se.sys}). In particular, if $\Psi_{+}$ and
$\Psi_{-}$ are the unique matrix-valued solutions of the respective problems%
\begin{gather*}
\frac{d}{dx}\Psi^{\pm}=ik\sigma_{3}\Psi^{\pm}+\left(
\begin{array}
[c]{cc}%
0 & u_{\pm}(x)\\
u_{\pm}(x) & 0
\end{array}
\right)  \Psi^{\pm},\\
\lim_{x\rightarrow\pm\infty}\left\vert \Psi^{\pm}(x)-e^{ixk\sigma_{3}%
}\right\vert =0,
\end{gather*}
then the formulas%
\begin{align}
f_{+}(x,k) &  =\psi_{11}^{+}(x,k)+\psi_{21}^{+}(x,k)\label{eq:f+}\\
f_{+}^{\left[  1\right]  }(x,k) &  =ik\left(  \psi_{11}^{+}(x,k)-\psi_{21}%
^{+}(x,k)\right)  \label{eq:f+[1]}\\
f_{-}(x,k) &  =\overline{\psi_{11}^{-}(x,k)}+\overline{\psi_{21}^{-}%
(x,k)}\label{eq:f-}\\
f_{-}^{\left[  1\right]  }(x,k) &  =-ik\left(  \overline{\psi_{11}^{-}%
(x,k)}-\overline{\psi_{21}^{-}(x,k)}\right)  \label{eq:f-[1]}%
\end{align}
hold, where the bar denotes complex conjugation. A short computation with
(\ref{eq:a.wronski})--(\ref{eq:b.wronski}) leads to the formulas%
\begin{align}
a(k) &  =\left\vert
\begin{array}
[c]{cc}%
\psi_{11}^{+}(x,k) & \overline{\psi_{21}^{-}(x,k)}\\
& \\
\psi_{21}^{+}(x,k) & \overline{\psi_{11}^{-}(x,k)}%
\end{array}
\right\vert -\frac{v(x)}{2ik}f_{+}(x,k)f_{-}(x,k),\label{eq:a}\\
& \nonumber\\
b(k) &  =\left\vert
\begin{array}
[c]{cc}%
\psi_{11}^{+}(x,k) & \overline{\psi_{11}^{+}(x,-k)}\\
& \\
-\psi_{21}^{+}(x,k) & -\overline{\psi_{21}^{-}(x,-k)}%
\end{array}
\right\vert +\frac{v(x)}{2ik}f_{+}(x,k)f_{-}(x,k),\label{eq:b}%
\end{align}
where, for a $2\times2$ matrix $A$, $\left\vert A\right\vert $ denotes the determinant.

These two formulas lie at the heart of our analysis for the direct problem.
They show explicitly the singularity at $k=0$ that occurs when $u_{+}\neq
u_{-}$; the singularity is always nonzero in this case since $v$ is strictly
nonzero and $f_{\pm}(x,0)$ are positive solutions of the zero-energy
Schr\"{o}dinger equation.

To study the scattering map via the formulas (\ref{eq:a})--(\ref{eq:b}), we
will use integral representations for the solutions $\Psi^{\pm}$. \ These
integral representations give $\Psi^{\pm}$ as Fourier transforms of functions
given by explicit multilinear series in $u_{\pm}$. Let~$\Psi^{\pm}%
(x,k)=\exp(ixk\sigma_{3})N^{\pm}(x,k)$ and denote by~$n_{ij}^{\pm}$ the
entries of~$N^{\pm}$. In order to compute the Jost solutions from
(\ref{eq:f+})--(\ref{eq:f-[1]}), it suffices to study~$n_{11}^{\pm}$ and
$n_{21}^{\pm}$. We will describe only the integral representations for
$n_{11}^{+}$ and $n_{21}^{+}$ and their properties since those of $n_{11}^{-}$
and $n_{21}^{-}$ are very similar.

From \cite{FHMP:2009}, section 3.1, equations (3.14)\ and (3.15) and
following, we have%
\begin{align*}
n_{11}^{+}(x,k)-1  &  =\int_{0}^{\infty}A(x,\zeta)e^{i\zeta k}~d\zeta,\\
n_{21}^{+}(x,k)  &  =\int_{x}^{\infty}B(x,\zeta)e^{i\zeta k}~d\zeta.
\end{align*}
Here $A$ and $B$ have multilinear expansions of the form%
\[
A(x,\zeta)=\sum_{n=1}^{\infty}A_{n}(x,\zeta), \qquad B(x,\zeta)=\sum
_{n=1}^{\infty}B_{n}(x,\zeta)
\]
with%
\[
A_{n}(x,\zeta)=\int_{\Omega_{2n}(\zeta)}u_{+}(y_{1})\ldots u_{+}%
(y_{2n})~dS_{2n}
\]
and%
\[
B_{n}(x,\zeta)=\int_{\Omega_{2n-1}(\zeta)}u_{+}(y_{1})\ldots u_{+}%
(y_{2n})~dS_{2n-1},
\]
where, for $\zeta\in\mathbb{R}$, $\Omega_{n}(\zeta)$ is the set of all
$\mathrm{y}=\left(  y_{1},\ldots,y_{n}\right)  $ in $\mathbb{R}^{n}$ with
$x\leq y_{1}\leq\ldots\leq y_{n}$ and
\begin{equation}
\sum_{j=0}^{n-1}(-1)^{j}y_{n-j}=\zeta, \label{eq:zeta}%
\end{equation}
while $dS_{n}$ is surface measure on the hyperplane (\ref{eq:zeta}).

For each fixed $x$ we have
\begin{align}
\left\Vert n_{11}^{+}(x,~\cdot~)-1\right\Vert _{H^{s}(\mathbb{R})} &
\leq\left\Vert A(x,~\cdot~)\right\Vert _{L^{2,s}(\mathbb{R})}%
,\label{eq:n11+.to.A}\\
\left\Vert n_{21}^{+}(x,~\cdot~)\right\Vert _{H^{s}(\mathbb{R})} &
\leq\left\Vert B(x,~\cdot~)\right\Vert _{L^{2,s}(\mathbb{R})}%
.\label{eq:n21+.to.B}%
\end{align}
Thus, to estimate the $H^{s}$-norms of $n_{11}^{+}$ and $n_{21}^{+}$ as
functions of $k$, it suffices to obtain summable estimates on $\left\Vert
A_{n}(x,~\cdot~)\right\Vert _{L^{2,s}(\mathbb{R})}$ and $\left\Vert
B_{n}(x,~\cdot~)\right\Vert _{L^{2,s}(\mathbb{R})}$.

To do this, we first note the identity
\[
\left\Vert \psi\right\Vert _{L^{2,s}(\mathbb{R})}=\sup\left\{  \left\vert
\int\varphi(\zeta)(1+\left\vert \zeta\right\vert )^{s}\psi(\zeta
)~d\zeta\right\vert :\left\Vert \varphi\right\Vert _{L^{2}}=1\right\}  .
\]
Next, setting ${\mathrm{y}}:=(y_{1},\dots,y_{n})$, ${\mathrm{d}}{\mathrm{y}%
}:=dy_{1}\cdots dy_{n}$, $U({\mathrm{y}}):=u_{+}(y_{1})\cdots u_{+}(y_{n})$,
and defining
\[
\zeta_{n}({\mathrm{y}}):=\sum_{j=0}^{n-1}(-1)^{j}y_{n-j},
\]
we find that
\begin{gather}
\int_{0}^{\infty}\varphi(\zeta)\int_{\Omega_{2n}(\zeta)}U({\mathrm{y}}%
)dS_{2n}d\zeta=\int_{x\leq y_{1}\leq\dots\leq y_{2n}}U({\mathrm{y}}%
)\varphi(\zeta_{2n}({\mathrm{y}})){\mathrm{d}}{\mathrm{y}},\label{eq:8.1}\\
\int_{x}^{\infty}\varphi(\zeta)\int_{\Omega_{2n-1}(\zeta)}U({\mathrm{y}%
})dS_{2n-1}d\zeta=\int_{x\leq y_{1}\leq\dots\leq y_{2n-1}}U({\mathrm{y}%
})\varphi(\zeta_{2n-1}({\mathrm{y}})){\mathrm{d}}{\mathrm{y}}.\label{eq:8.2}%
\end{gather}
Observe that for ${\mathrm{y}}$ obeying $0\leq x\leq y_{1}\leq\cdots\leq
y_{n}$ the estimate $|\zeta_{n}({\mathrm{y}})|\leq y_{n}$ holds. We then get
from the integral representation for $A_{n}$ and \eqref{eq:8.1} that, for any
$\varphi\in L^{2}({\mathbb{R}})$,
\begin{align*}
&  \Bigl|\int_{0}^{\infty}(1+\zeta)^{s}\varphi(\zeta)A_{n}(x,\zeta
)\,d\zeta\Bigr|\\
&  \leq\int_{x\leq y_{1}\leq\dots\leq y_{2n-1}}|u_{+}(y_{1})\cdots
u_{+}(y_{2n-1})|\int_{x}^{\infty}(1+y_{2n})^{s}|\varphi(\zeta_{2n}%
({\mathrm{y}}))||u_{+}(y_{2n})|\,{\mathrm{d}}{\mathrm{y}}\\
&  \leq\frac{\Vert u_{+}\Vert_{L^{1}(\mathbb{R}^{+})}^{2n-1}}{(2n-1)!}\Vert
u_{+}\Vert_{L^{2,s}({\mathbb{R}}^{+})}\Vert\varphi\Vert_{L^{2}({\mathbb{R}})}.
\end{align*}
Therefore
\[
\Vert A_{n}(x,\,\cdot\,)\Vert_{L^{2,s}({\mathbb{R}}^{+})}\leq\frac{\Vert
u_{+}\Vert_{L^{1}(\mathbb{R}^{+})}^{2n-1}}{(2n-1)!}\Vert u_{+}\Vert
_{L^{2,s}({\mathbb{R}}^{+})},
\]
and similar estimates give
\[
\Vert B_{n}(x,\,\cdot\,)\Vert_{L^{2,s}({\mathbb{R}}^{+})}\leq\frac{\Vert
u_{+}\Vert_{L^{1}(\mathbb{R}^{+})}^{2n-2}}{(2n-2)!}\Vert u_{+}\Vert
_{L^{2,s}({\mathbb{R}}^{+})}.
\]
Since $A_{n}(x,\,\cdot\,)$ and $B_{n}(x,\,\cdot\,)$ are multilinear functions
of~$u_{+}$ and the series for $A(x,\,\cdot\,)$ and $B(x,\,\cdot\,)$ converge
absolutely in~$L^{2,s}({\mathbb{R}})$, standard arguments show that, for every
fixed $x\geq0$, $A(x,\,\cdot\,)$ and $B(x,\,\cdot\,)$ depend analytically in
$L^{2,s}({\mathbb{R}})$ on~$u_{+}\in X^{+}$. Hence:

\begin{proposition}
\label{prop:n+.Hs}Assume that $s>1/2$ and that $u_{+}\in L^{2,s}({\mathbb{R}%
}^{+})$. Then $n_{11}^{+}(x,\,\cdot\,)-1$ and $n_{21}^{+}(x,\,\cdot\,)$ belong
to $H^{s}(\mathbb{R})$ for each fixed~$x\geq0$, depend analytically therein
on~$u_{+} \in X^{+}$, and the estimates
\[
\sup_{x\geq0} \bigl(\Vert n_{11}^{+}(x,\,\cdot\,)-1 \Vert_{H^{s}(\mathbb{R})}
+ \Vert n_{21}^{+}(x,\,\cdot\,)\Vert_{H^{s}(\mathbb{R})}\bigr)
\leq\Vert u_{+}\Vert_{L^{2,s}({\mathbb{R}}^{+})} \exp\{\Vert u_{+}\Vert
_{L^{1}({\mathbb{R}}^{+})}\}
\]
hold.
\end{proposition}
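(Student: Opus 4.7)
The plan is to assemble the proposition from the multilinear bounds just derived, using the Plancherel-type inequalities (\ref{eq:n11+.to.A})--(\ref{eq:n21+.to.B}) and summing the series for $A$ and $B$. The per-term bounds on $\Vert A_n(x,\,\cdot\,)\Vert_{L^{2,s}(\mathbb{R}^+)}$ and $\Vert B_n(x,\,\cdot\,)\Vert_{L^{2,s}(\mathbb{R}^+)}$ obtained above are already uniform in $x\geq 0$: enlarging the domain from $\{y_1\geq x\}$ to $\{y_1\geq 0\}$ only increases the right-hand sides, while the proof used $(1+|\zeta_n(\mathrm{y})|)^s\leq (1+y_n)^s$ together with factors of $\Vert u_+\Vert_{L^1}$ for the intermediate variables. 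Summing the two series gives
\[
\sup_{x\geq 0}\bigl(\Vert A(x,\,\cdot\,)\Vert_{L^{2,s}(\mathbb{R}^+)}+\Vert B(x,\,\cdot\,)\Vert_{L^{2,s}(\mathbb{R}^+)}\bigr)\leq \Vert u_+\Vert_{L^{2,s}(\mathbb{R}^+)}\bigl(\sinh\Vert u_+\Vert_{L^1(\mathbb{R}^+)}+\cosh\Vert u_+\Vert_{L^1(\mathbb{R}^+)}\bigr),
\]
and the identity $\sinh(t)+\cosh(t)=e^t$ yields the stated exponential factor. Since $A(x,\,\cdot\,)$ and $B(x,\,\cdot\,)$ are supported in $[0,\infty)$, extending by zero to $\mathbb{R}$ does not change their $L^{2,s}$ norms, so (\ref{eq:n11+.to.A})--(\ref{eq:n21+.to.B}) converts the above display into the claimed $H^s$-bound.

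For the analyticity claim, I would observe that each $A_n(x,\,\cdot\,)$ and $B_n(x,\,\cdot\,)$ is a homogeneous expression of degree $2n$ in $u_+$, so by polarization extends to a bounded symmetric $2n$-linear map from $(X^+)^{2n}$ into $L^{2,s}(\mathbb{R}^+)$. Hence $u_+\mapsto A_n(x,\,\cdot\,)$ and $u_+\mapsto B_n(x,\,\cdot\,)$ are homogeneous polynomials of degree $2n$ between Banach spaces, and the same bounds show that the formal power series
\[
A(x,\,\cdot\,)=\sum_{n=1}^\infty A_n(x,\,\cdot\,),\qquad B(x,\,\cdot\,)=\sum_{n=1}^\infty B_n(x,\,\cdot\,)
\]
converge absolutely in $L^{2,s}(\mathbb{R}^+)$ uniformly on bounded subsets of $X^+$. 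A standard Banach-space Taylor criterion then gives analyticity of $u_+\mapsto A(x,\,\cdot\,)$ and $u_+\mapsto B(x,\,\cdot\,)$, which transfers to $u_+\mapsto n_{11}^+(x,\,\cdot\,)-1$ and $u_+\mapsto n_{21}^+(x,\,\cdot\,)$ through the bounded linear Fourier-type maps encoded in (\ref{eq:n11+.to.A})--(\ref{eq:n21+.to.B}).

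The main obstacle is really only bookkeeping: verifying that the factorials $(2n-1)!$ and $(2n-2)!$ in the two estimates line up to produce $\sinh$ and $\cosh$ respectively, and checking that the $x$-independence of the bounds indeed comes from the freedom to relax $\{y_1\geq x\}$ to $\{y_1\geq 0\}$ without paying any price. Once these are in place the proof is essentially mechanical, consisting of (i) summation, (ii) invocation of (\ref{eq:n11+.to.A})--(\ref{eq:n21+.to.B}), and (iii) the polynomial-analyticity principle for convergent power series in Banach spaces.
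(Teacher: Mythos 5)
Your proposal is correct and follows essentially the same route as the paper: the text preceding the proposition already derives the per-term bounds $\Vert A_n(x,\,\cdot\,)\Vert_{L^{2,s}}\leq \Vert u_+\Vert_{L^1}^{2n-1}\Vert u_+\Vert_{L^{2,s}}/(2n-1)!$ and $\Vert B_n(x,\,\cdot\,)\Vert_{L^{2,s}}\leq \Vert u_+\Vert_{L^1}^{2n-2}\Vert u_+\Vert_{L^{2,s}}/(2n-2)!$ (uniformly in $x\geq 0$ for exactly the reason you give), sums them to $\sinh+\cosh=\exp$, passes to $H^s$ via (\ref{eq:n11+.to.A})--(\ref{eq:n21+.to.B}), and invokes the same standard multilinear-series argument for analyticity. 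No discrepancies to report.
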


A similar analysis, based on the integral representations for $n_{11}^{-}$ and
$n_{21}^{-}$, shows:

\begin{proposition}
\label{prop:n-.Hs}Assume that $s>1/2$ and that $u_{-}\in L^{2,s}({\mathbb{R}%
}^{-})$. Then $n_{11}^{-}(x,\,\cdot\,)-1$ and $n_{21}^{-}(x,\,\cdot\,)$ belong
to $H^{s}(\mathbb{R})$ for each fixed~$x\leq0$, depend analytically therein
on~$u_{-} \in X^{-}$, and the estimates
\[
\sup_{x\leq0}\bigl(\Vert n_{11}^{-}(x,\,\cdot\,)-1\Vert_{H^{s}(\mathbb{R})}
+\Vert n_{21}^{-}(x,\,\cdot\,)\Vert_{H^{s}(\mathbb{R})}\bigr)
\leq\Vert u_{-}\Vert_{L^{2,s}({\mathbb{R}}^{-})} \exp\{\Vert u_{-}\Vert
_{L^{1}({\mathbb{R}}^{-})}\}
\]
hold.
\end{proposition}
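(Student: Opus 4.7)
The plan is to mirror the argument used to prove Proposition~\ref{prop:n+.Hs}, now applied to the left Jost data. I would first invoke the integral representations for $n_{11}^{-}$ and $n_{21}^{-}$ from \cite{FHMP:2009}, which by symmetry express
\begin{align*}
n_{11}^{-}(x,k) - 1 &= \int_{-\infty}^{0} A^{-}(x,\zeta)\, e^{-i\zeta k}\, d\zeta,\\
n_{21}^{-}(x,k)     &= \int_{-\infty}^{x} B^{-}(x,\zeta)\, e^{-i\zeta k}\, d\zeta,
\end{align*}
as Fourier-type integrals of functions $A^{-}$ and $B^{-}$ admitting multilinear expansions in $u_{-}$ whose $n$-th terms $A^{-}_{n}$ and $B^{-}_{n}$ are supported on simplices of the form $y_n\le y_{n-1}\le\dots\le y_1\le x\le 0$, with $\zeta$ the alternating sum analogous to \eqref{eq:zeta}. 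The Plancherel-type bounds analogous to \eqref{eq:n11+.to.A}--\eqref{eq:n21+.to.B} then reduce the task to uniform estimates on $\|A^{-}_{n}(x,\cdot)\|_{L^{2,s}(\mathbb{R}^{-})}$ and $\|B^{-}_{n}(x,\cdot)\|_{L^{2,s}(\mathbb{R}^{-})}$ for $x \leq 0$.

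Next I would apply the duality identity for the $L^{2,s}(\mathbb{R}^{-})$-norm, pair against a test function $\varphi\in L^{2}$, and rewrite the resulting $\zeta$-integral as a simplex integral via the left-sided analogues of \eqref{eq:8.1}--\eqref{eq:8.2}. The geometric input that replaces the bound $|\zeta_n(\mathrm{y})|\le y_n$ is that for $y_n\le\dots\le y_1\le 0$ one has $|\zeta_n(\mathrm{y})|\le |y_n|$, so that the weight $(1+|\zeta|)^{s}$ is absorbed into the single factor of $u_-$ carrying the extremal (now most negative) variable $y_n$. Cauchy--Schwarz in the $\varphi$-variable then controls that distinguished factor in $L^{2,s}(\mathbb{R}^{-})$, while the remaining $2n-1$ or $2n-2$ factors contribute $\|u_{-}\|_{L^{1}(\mathbb{R}^{-})}^{2n-1}/(2n-1)!$ or $\|u_{-}\|_{L^{1}(\mathbb{R}^{-})}^{2n-2}/(2n-2)!$ from the simplex volume. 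Summing in $n$ yields the exponential estimate uniformly in $x\le 0$, and analyticity of $u_-\mapsto n_{11}^{-}(x,\cdot)-1$ and $u_-\mapsto n_{21}^{-}(x,\cdot)$ as maps $X^{-}\to H^{s}(\mathbb{R})$ follows from absolute convergence in $L^{2,s}(\mathbb{R}^{-})$ of multilinear series with homogeneous summands.

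The main obstacle is bookkeeping rather than analysis: one must verify that the sign and Fourier conventions used in \cite{FHMP:2009} for the left Jost solutions, combined with the complex conjugation in formulas \eqref{eq:f-}--\eqref{eq:f-[1]}, produce integral kernels of the form above on simplices whose orientation places the extremal $u_{-}$ factor at the endpoint $y_n$ where the weight $|\zeta_n(\mathrm{y})|$ is absorbed. Once those conventions are pinned down, each estimate in the $+$ argument transposes essentially verbatim, with $\mathbb{R}^{+}$ replaced by $\mathbb{R}^{-}$ and $\sup_{x\ge 0}$ replaced by $\sup_{x\le 0}$ throughout.
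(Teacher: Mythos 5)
Your proposal is correct and follows essentially the same route as the paper: the paper proves Proposition~\ref{prop:n-.Hs} by exactly the ``similar analysis'' you describe, transposing the multilinear expansion, the duality identity for the $L^{2,s}$-norm, and the key bound $|\zeta_n(\mathrm{y})|\le|y_n|$ on the left-oriented simplex from the proof of Proposition~\ref{prop:n+.Hs}. Your attention to the orientation of the simplex and to which factor of $u_-$ absorbs the weight is precisely the bookkeeping the paper leaves implicit.
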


\section{The Direct Problem}

\label{sec:direct}

We now consider the mappings $\left(  u_{-},u_{+},v(0)\right)  \mapsto r_{\pm
}$. In order to study the mapping properties we introduce the auxiliary
functions%
\begin{align}
\widetilde{a}(k)  &  =\frac{k}{k+i}a(k),\label{eq:tildea}\\
\widetilde{b}(k)  &  =\frac{k}{k+i}b(k),\label{eq:tildeb}\\
\widetilde{r}(k)  &  =\frac{1-\left\vert r_{\pm}(k)\right\vert ^{2}}{k^{2}},
\label{eq:tilder}%
\end{align}
and note the relations%
\begin{equation}
r_{-}(k)=\frac{\widetilde{b}(k)}{\widetilde{a}(k)}, \qquad r_{+}(k)=\frac
{i-k}{i+k}\frac{\widetilde{b}(-k)}{\widetilde{a}(k)} \label{eq:rpm}%
\end{equation}
and%
\begin{equation}
\widetilde{r}(k)=\frac{1}{k^{2}+1}\frac{1}{\left\vert \widetilde
{a}(k)\right\vert ^{2}}. \label{eq:tilder.rep}%
\end{equation}

\begin{proposition}
\label{prop:direct}Suppose that $u_{\pm}\in L^{2,s}(\mathbb{R}^{\pm})$ for
some $s>1/2$. Then $r_{\pm}\in H^{s}(\mathbb{R})$ and $\widetilde{r}\in
H^{s}(\mathbb{R})$ with $\widetilde{r}(0)\neq0$, and the maps
\begin{align*}
L^{2,s}(\mathbb{R}^{+})\times L^{2,s}(\mathbb{R}^{-})\times\mathbb{R}^{+} &
\rightarrow H^{s}(\mathbb{R})^{3}\\
\left(  u_{+},u_{-},v(0)\right)   &  \mapsto(r_{-},r_{+},\widetilde{r})
\end{align*}
are locally Lipschitz continuous.
\end{proposition}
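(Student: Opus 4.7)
The plan is to specialize the Wronskian formulas~(\ref{eq:a}) and~(\ref{eq:b}) for $a(k)$ and $b(k)$ to $x=0$, where $e^{ixk\sigma_{3}}$ reduces to the identity so that $\psi_{ij}^{\pm}(0,k)=n_{ij}^{\pm}(0,k)$, and then to propagate the $H^{s}$-regularity provided by Propositions~\ref{prop:n+.Hs} and~\ref{prop:n-.Hs} through the algebraic identities~(\ref{eq:tildea})--(\ref{eq:tilder.rep}). Two structural facts drive the argument: first, $H^{s}(\mathbb{R})$ is a Banach algebra for $s>1/2$, so products of elements of the form $1+H^{s}$ remain of that form; second, $1/(k+i)$ lies in $H^{s}(\mathbb{R})$ for every $s\ge 0$ (its inverse Fourier transform is a multiple of $\theta(\zeta)e^{-\zeta}$), so the multiplier $k/(k+i)=1-i/(k+i)$ will absorb the Coulomb-type $1/k$ singularity carried by $a$ and $b$ at the origin.

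Concretely, with $D(k):=n_{11}^{+}(0,k)\overline{n_{11}^{-}(0,k)}-n_{21}^{+}(0,k)\overline{n_{21}^{-}(0,k)}$ and $f_{\pm}(0,k)=n_{11}^{\pm}(0,k)+n_{21}^{\pm}(0,k)$, formula~(\ref{eq:a}) at $x=0$ becomes
\[
a(k)=D(k)-\frac{v(0)}{2ik}f_{+}(0,k)f_{-}(0,k),
\]
with an analogous expression for $b(k)$ coming from~(\ref{eq:b}). Propositions~\ref{prop:n+.Hs} and~\ref{prop:n-.Hs} give $n_{11}^{\pm}(0,\cdot)-1,\ n_{21}^{\pm}(0,\cdot)\in H^{s}(\mathbb{R})$ with analytic dependence on $u_{\pm}\in X^{\pm}$; multiplying out and using the Banach algebra property shows that $D-1$ and $f_{\pm}(0,\cdot)-1$ belong to $H^{s}$. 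Multiplying the displayed identity by $k/(k+i)$ yields
\[
\widetilde{a}(k)=\frac{k}{k+i}D(k)-\frac{v(0)}{2i(k+i)}f_{+}(0,k)f_{-}(0,k),
\]
so $\widetilde{a}-1\in H^{s}$ and, by the analogous manipulation, $\widetilde{b}\in H^{s}$. The explicit bounds of Propositions~\ref{prop:n+.Hs} and~\ref{prop:n-.Hs}, together with the joint polynomial dependence on $v(0)$, deliver local Lipschitz continuity of the map $(u_{+},u_{-},v(0))\mapsto(\widetilde{a},\widetilde{b})$ into $H^{s}(\mathbb{R})\times H^{s}(\mathbb{R})$.

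The chief obstacle is to show that $\widetilde{a}$ has a positive lower bound on $\mathbb{R}$, stable under small perturbations of the data, so that $1/\widetilde{a}-1\in H^{s}$ by algebra-based inversion and~(\ref{eq:rpm}) then gives $r_{\pm}\in H^{s}$. At $k=0$ the explicit formula reduces to $\widetilde{a}(0)=\tfrac{1}{2}v(0)\,f_{+}(0,0)\,f_{-}(0,0)>0$, since $v(0)>0$ and $f_{\pm}(0,0)$ are values at $x=0$ of positive zero-energy Schr\"odinger solutions; for real $k\neq 0$ the identity $|a(k)|^{2}=1+|b(k)|^{2}\ge 1$ (absence of bound states) prevents $a$, hence $\widetilde{a}$, from vanishing; and as $|k|\to\infty$ one reads off $\widetilde{a}(k)\to 1$ from the representation above. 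Continuity then gives a positive lower bound $|\widetilde{a}|\ge c>0$ on $\mathbb{R}$, preserved on a neighborhood of $(u_{+},u_{-},v(0))$. Finally, writing $\widetilde{r}(k)=\frac{1}{k^{2}+1}\cdot\frac{1}{|\widetilde{a}(k)|^{2}}=\frac{1}{k^{2}+1}\bigl(1+(|\widetilde{a}|^{-2}-1)\bigr)$ and noting that $1/(k^{2}+1)\in H^{s}(\mathbb{R})$ for every $s\ge 0$ (its inverse Fourier transform is a multiple of $e^{-|x|}$) while $|\widetilde{a}|^{-2}-1\in H^{s}$, the Banach algebra property produces $\widetilde{r}\in H^{s}$ with locally Lipschitz dependence on the data, and $\widetilde{r}(0)=|\widetilde{a}(0)|^{-2}>0$ settles the nonvanishing at the origin.
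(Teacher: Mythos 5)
Your proposal is correct and follows essentially the same route as the paper: evaluate the Wronskian formulas (\ref{eq:a})--(\ref{eq:b}) at $x=0$, use Propositions~\ref{prop:n+.Hs} and \ref{prop:n-.Hs} together with the Banach-algebra structure of $1\dotplus H^{s}$ to place $\widetilde{a}$ and $\widetilde{b}$ there, establish $\inf|\widetilde a|>0$ via the value at $k=0$, the identity $|a|^{2}=1+|b|^{2}$, and the limit $|\widetilde a|\to1$ at infinity, and then invert in the algebra to obtain $r_{\pm}$ and $\widetilde r$ from (\ref{eq:rpm}) and (\ref{eq:tilder.rep}). The only difference is cosmetic: where you appeal to ``algebra-based inversion,'' the paper invokes its Lemma~\ref{lem:spectrum} (the spectrum of $g\in1\dotplus H^{s}$ lies in $\overline{\operatorname{ran}g}$) to justify that a uniform lower bound on $|\widetilde a|$ yields invertibility in $1\dotplus H^{s}$.
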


\begin{proof}
From the representation formulae (\ref{eq:a}) and (\ref{eq:b}) evaluated at
$x=0$ we have%
\[
\widetilde{a}(k)=\frac{k}{k+i}\left\vert
\begin{array}
[c]{cc}%
n_{11}^{+}(0,k) & \overline{n_{21}^{-}(0,k)}\\
& \\
n_{21}^{+}(0,k) & \overline{n_{11}^{-}(0,k)}%
\end{array}
\right\vert -\frac{1}{k+i}\frac{v(0)}{2i}f_{+}(0,k)f_{-}(0,k)
\]
and
\[
\widetilde{b}(k)=\frac{k}{k+i}\left\vert
\begin{array}
[c]{cc}%
n_{11}^{+}(0,k) & \overline{n_{11}^{+}(0,-k)}\\
& \\
-n_{21}^{+}(0,k) & -\overline{n_{21}^{-}(0,-k)}%
\end{array}
\right\vert +\frac{1}{k+i}\frac{v(0)}{2i}f_{+}(0,k)f_{-}(0,k).
\]
In view of Propositions~\ref{prop:n+.Hs} and \ref{prop:n-.Hs} the functions
$n_{ij}^{\pm}(0,\,\cdot\,)$ and $f_{\pm}(0,\,\cdot\,)$ belong to the Banach
algebra $1\dotplus H^{s}({\mathbb{R}})$ (see Appendix~\ref{sec:app}) and
depend locally Lipschitz continuously therein on the Riccati variables
$(u_{+},u_{-},v(0))$; thus the same is true of $\widetilde{a}$ and
$\widetilde{b}$. Moreover, the function $\widetilde{a}$ is an invertible
element of $1\dotplus H^{s}({\mathbb{R}})$. Indeed, by
Lemma~\ref{lem:spectrum} it suffices to show that $\inf|\widetilde{a}|>0$. We
observe that $\widetilde{a}(0)=v(0)f_{+}(0,k)f_{-}(0,k)\neq0$ while
$|\widetilde{a}(k)|>0$ for all nonzero real~$k$ due to~\eqref{eq:ab1}.
Representation~\eqref{eq:a.wronski} for $a$ along with the asymptotic behavior
of the Jost solutions imply that $|a(k)|\rightarrow1$ as $k\rightarrow
\pm\infty$, so that $|\widetilde{a}(k)|\rightarrow1$ as $k\rightarrow\pm
\infty$ as well. Recalling that $\widetilde{a}$ is a continuous function, we
conclude that $\widetilde{a}$ is an invertible element of the Banach
algebra~$1\dotplus H^{s}({\mathbb{R}})$. Clearly, the same conclusion holds
for all $\widetilde{a}$ in a neighborhood of the given one.

We now use~\eqref{eq:rpm} to conclude that the reflection coefficients
$r^{\pm}$ belong to $H^{s}({\mathbb{R}})$ and depend therein locally Lipschitz
continuously on the Riccati variables. Since $|\widetilde{a}(k)|^{2}$ is an
invertible element of $1\dotplus H^{s}({\mathbb{R}})$,
relation~\eqref{eq:tilder.rep} yields the inclusion $\widetilde{r}\in
H^{s}({\mathbb{R}})$, and the continuous dependence follows by the same
arguments as above. Finally, \eqref{eq:tilder.rep} and $\widetilde{a}(0)\neq0$
yield $\widetilde{r}(0)\neq0$, and the proof is complete.
\end{proof}

Finally, we note the following variant of Proposition~3.3 of~\cite{HMP:2010},
which concerns continuity of the involution (\ref{eq:inv}) between reflection
coefficients. For a given $r\in\mathcal{R}_{s}$ with $\widetilde{r}$
of~\eqref{eq:tilder-def}, we define%
\[
t(z)=\frac{z}{z+i}\exp\left(  \frac{1}{2\pi i}\int_{\mathbb{R}}\log\left[
\left(  s^{2}+1\right)  \widetilde{r}(s)\right]  \frac{ds}{s-z}\right)
\]
for $\operatorname{Im}(z)>0$, and by the boundary value for real $z=k$.

\begin{proposition}
The mapping
\[
\mathcal{I}_{s}:r\mapsto-\frac{t(k)}{t(-k)}r(-k)
\]
is a continuous involution from $\mathcal{R}_{s}$ to itself.
\end{proposition}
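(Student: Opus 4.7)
The plan is to build on Proposition~3.3 of~\cite{HMP:2010}, which already establishes that $\mathcal{I}$ is a continuous involution on $\mathcal{R}_>$ equipped with the metric $d$ of~\eqref{eq:dx}. The task here reduces to showing that the finer space $\mathcal{R}_s$ is preserved and that the map is continuous in the stronger metric $d_s$; the involution identity $\mathcal{I}_s\circ\mathcal{I}_s=\mathrm{id}$ is formal and inherited directly.

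First I would observe that the reality conditions $t(-k)=\overline{t(k)}$ and $r(-k)=\overline{r(k)}$ make $-t(k)/t(-k)$ unimodular, so
\[
\bigl|\mathcal{I}_s(r)(k)\bigr|=|r(-k)|=|r(k)|,
\]
hence $\widetilde{\mathcal{I}_s(r)} \equiv \widetilde{r}$. This reduces the analysis to the reflection coefficient itself. Writing $\phi(k):=-t(k)/t(-k)$, an application of Sokhotski--Plemelj to the Cauchy integral in the definition of $t$, together with the evenness of~$\widetilde{r}$ (which follows from $|r(-k)|=|r(k)|$), yields the explicit formula
\[
\phi(k) = -\frac{k-i}{k+i}\exp\!\bigl(i\mathcal{H}[L](k)\bigr),\qquad L(s):=\log\!\bigl[(s^2+1)\widetilde{r}(s)\bigr],
\]
where $\mathcal{H}$ is the Hilbert transform. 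The Möbius factor satisfies $(k-i)/(k+i)=1-2i/(k+i)\in 1\dotplus H^s$, and I would then argue that $\exp(i\mathcal{H}[L])\in 1\dotplus H^s$ as well. Combining these with the decomposition
\[
\mathcal{I}_s(r)(k) = -r(-k) + \bigl(1+\phi(k)\bigr)\,r(-k),
\]
the Banach algebra property of $1\dotplus H^s(\mathbb{R})$ for $s>1/2$ (Appendix~\ref{sec:app}) and the fact that $r\mapsto r(-\cdot)$ is an isometry of $H^s$ will give $\mathcal{I}_s(r)\in H^s$.

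The central step, and the main technical obstacle, is establishing $L\in H^s$: the factor $s^2+1$ obstructs any naive bound. The remedy is the scattering-theoretic identity $s^2\widetilde{r}(s)=1-|r(s)|^2$, which yields the cancellation
\[
(s^2+1)\widetilde{r}(s)-1 = \widetilde{r}(s)-|r(s)|^2,
\]
placing the left-hand side in $H^s$, since $r,\widetilde{r}\in H^s$ and $H^s$ is a Banach algebra for $s>1/2$. Moreover $(s^2+1)\widetilde{r}(s)\ge c>0$ uniformly on~$\mathbb{R}$, because the expression is continuous, strictly positive, and tends to $\widetilde{r}(0)>0$ at the origin and to $1$ at infinity. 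Composition with the logarithm, smooth on any interval bounded away from zero, then places $L\in H^s$ via the standard $H^s$ composition theorem for $s>1/2$; boundedness of $\mathcal{H}$ on $H^s$ and the analytic composition $z\mapsto e^{iz}-1$ complete the membership $\exp(i\mathcal{H}[L])-1\in H^s$. Local Lipschitz continuity in $d_s$ follows by running the same chain of estimates, noting that the lower bound on $(s^2+1)\widetilde{r}(s)$ is stable under small perturbations of $\widetilde{r}$ in $H^s$ and that each operation (composition with $\log$, $\mathcal{H}$, exponentiation, and algebra multiplication) is locally Lipschitz.
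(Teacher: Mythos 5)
Your proof is correct and takes essentially the route the paper intends: the authors omit the argument precisely because it is the Banach-algebra proof of Proposition~3.3 in \cite{HMP:2010} transplanted to $1\dotplus H^{s}(\mathbb{R})$, which is exactly what you carry out (unimodularity of $-t(k)/t(-k)$ giving $\widetilde{\mathcal{I}_s(r)}=\widetilde r$, the key cancellation $(s^{2}+1)\widetilde r-1=\widetilde r-|r|^{2}$, and the Wiener--Levy corollary of Appendix~\ref{sec:app} applied to $\log$ and $\exp$). I see no gaps.
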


We omit the proof, since it is completely analogous to that of Proposition~3.3
in~\cite{HMP:2010}, except that the Banach algebra $1\dotplus\widehat{X}$
there is replaced with the Banach algebra~$1\dotplus H^{s}({\mathbb{R}})$.

\section{The Inverse Problem}

\label{sec:inverse}

In this section, we assume given a function $r\in\mathcal{R}_{s}$ and set
$r^{\#}=\mathcal{I}_{s}r$. It follows from \cite{HMP:2010} that there exists a
unique distribution $q\in H^{-1}(\mathbb{R})$ with Riccati representatives
$u=u_{+}\in L^{2}(\mathbb{R})\cap L^{1}(\mathbb{R}^{+})$ and $u^{\#}=u_{-}\in
L^{2}(\mathbb{R})\cap L^{1}(\mathbb{R}^{-})$ so that the corresponding
Schr\"{o}dinger operator has $r$ and $r^{\#}$ as its right and left reflection
coefficients, respectively. We wish to show that the Riccati representatives
$u$ and $u^{\#}$ reconstructed from $r$ and $r^{\#}$ belong respectively to
$L^{2,s}(\mathbb{R}^{+})$ and $L^{2,s}(\mathbb{R}^{-})$. To do so, we will
recall the reconstruction formulas for $u$ and $u^{\#}$ derived in
\cite{HMP:2010} from the Gelfand--Levitan--Marchenko equations. Let us define%
\begin{align*}
F(x) &  =\frac{1}{\pi}\int_{-\infty}^{\infty}r(k)e^{2ikx}~dk,\\
F^{\#}(x) &  =\frac{1}{\pi}\int_{-\infty}^{\infty}r^{\#}(k)e^{-2ikx}dk.
\end{align*}
Note that $F$ and $F^{\#}$ belong to $L^{2,s}(\mathbb{R})$. Setting%
\[
\Omega(x)=\left(
\begin{array}
[c]{cc}%
0 & F(x)\\
F(x) & 0
\end{array}
\right)  ,~~\Omega^{\#}(x)=\left(
\begin{array}
[c]{cc}%
0 & F^{\#}(x)\\
F^{\#}(x) & 0
\end{array}
\right)  ,
\]
the right and left Gelfand--Levitan--Marchenko equations are respectively
\begin{align*}
\Omega(x+\zeta)+\Gamma(x,\zeta)+\int_{0}^{\infty}\Gamma(x,t)\Omega
(x+\zeta+t)~dt &  =0,%
\qquad
\zeta>0,\\
\Omega^{\#}(x+\zeta)+\Gamma^{\#}(x,\zeta)+\int_{0}^{\infty}\Gamma
^{\#}(x,t)\Omega^{\#}(x+\zeta+t)~dt &  =0,%
\qquad
\zeta<0,
\end{align*}
for the $2\times2$ matrix-valued kernels $\Gamma$ and $\Gamma^{\#}$. The right
and left Riccati representatives are reconstructed via%
\begin{align*}
u(x)  & =-\Gamma_{12}(x,0),\\
u^{\#}(x)  & =\Gamma_{12}^{\#}(x,0).
\end{align*}
Let $\gamma(x,\zeta)=\Gamma_{12}(x,\zeta)$ and $\gamma^{\#}(x,\zeta
)=\Gamma_{12}^{\#}(x,\zeta)$. Let $T_{F}$ and $T_{F^{\#}}$ be the integral
operators (depending parametrically on $x$)%
\begin{align*}
\left(  T_{F}\psi\right)  (\zeta) &  =\int_{0}^{\infty}F(x+\zeta
+t)\psi(t)~dt,\\
\left(  T_{F^{\#}}\psi\right)  (\zeta) &  =\int_{-\infty}^{0}F^{\#}%
(x+\zeta+t)\psi(t)~dt.
\end{align*}
Then, as vectors in $L^{2}(\mathbb{R}^{+})$ (resp. in $L^{2}(\mathbb{R}^{-})$)
for each fixed $x$,
\begin{align*}
\left(  I-T_{F}^{2}\right)  \gamma(x,~\cdot~) &  =-F(x+~\cdot~),\\
\left(  I-T_{F^{\#}}^{2}\right)  \gamma^{\#}(x,~\cdot~) &  =-F^{\#}%
(x+~\cdot~).
\end{align*}
As shown in the proof of \cite{HMP:2010}, Proposition 4.2, the operator
$\left(  I-T_{F}^{2}\right)  ^{-1}$ is bounded from $L^{2}(\mathbb{R}^{+})$ to
itself (resp. $\left(  I-T_{F^{\#}}^{2}\right)  ^{-1}$ is bounded from
$L^{2}(\mathbb{R}^{-})$ to itself). From these equations and the
reconstruction formulas, it is not difficult to see that
\begin{align*}
u(x) &  =F(x)-G(x),\\
u^{\#}(x) &  =-F^{\#}(x)+G^{\#}(x),
\end{align*}
where%
\begin{align*}
G(x) &  =\int_{0}^{\infty}F(x+t)H(x,t)~dt,\\
G^{\#}(x) &  =\int_{-\infty}^{0}F^{\#}(x+t)H^{\#}(x,t)~dt
\end{align*}
and
\begin{align*}
H(x,~\cdot~) &  =\left(  I-T_{F}^{2}\right)  ^{-1}\left(  (T_{F}%
F)(x+~\cdot~)\right)  ,\\
H^{\#}(x,~\cdot~) &  =\left(  I-T_{F^{\#}}^{2}\right)  ^{-1}\left(
(T_{F^{\#}}F^{\#})(x+~\cdot~)\right)  .
\end{align*}

We are interested in estimating the behavior of $G$ as $x\rightarrow+\infty$
(resp. of $G^{\#}$ as $x\rightarrow-\infty$). It suffices to consider
$x>x_{0}$ (resp. $x<-x_{0}$) for sufficiently large $x_{0}$. Choosing $x_{0}$
so large that
\[
\int_{x_{0}}^{\infty}\left\vert F(s)\right\vert ~ds<1/2,~\int_{-\infty}%
^{x_{0}}\left\vert F^{\#}(s)\right\vert ~ds<1/2,
\]
we have $\left\Vert T_{F}\right\Vert _{L^{p}\rightarrow L^{p}}<1$ for $p=1,2$,
and similarly for $T_{F^{\#}}$. Note that we can make such a choice of fixed
$x_{0}$ in a small neighborhood of a given $F\in L^{2,s}(\mathbb{R})$ since
$L^{2,s}(\mathbb{R})\subset L^{1}(\mathbb{R})$ for $s>1/2$. We can then obtain
convergent multilinear expansions for $G$ and $G^{\#}$ valid respectively for
$x>x_{0}$ and $x<-x_{0}$. These multilinear expansions can be estimated, much
as in the previous section, to obtain the required weighted estimates. We will
give the analysis for $G$ since the analysis for $G^{\#}$ is very similar.

For $x>x_{0}$ we have the expansion%
\[
H(x,~\cdot~)=\sum_{j=0}^{\infty}\left(  T_{F}^{2j+1}\left[  F(x+~\cdot
~)\right]  \right)  (~\cdot~)
\]
convergent in $L^{2}(\mathbb{R}^{+})$. From this expansion and the Cauchy--Schwarz
inequality it follows that
\[
G(x)=\sum_{n=1}^{\infty}G_{n}(x)
\]
in $L^{\infty}(x_{0},\infty)$, where%
\[
G_{n}(x)=\int_{\mathbb{R}_{+}^{2n}}F(x+t_{1})F(x+t_{1}+t_{2})\ldots
F(x+t_{2n-1}+t_{2n})F(x+t_{2n})~\mathrm{dt}%
\]
and $\mathrm{dt}:=dt_{1}\ldots dt_{2n}$. We will show that, for $x_{0}>0$,
\begin{equation}
    \int_{x_{0}}^{\infty}\left(  1+x\right)^{2s}
        \left\vert G_{n}(x)\right\vert^{2}~dx
    \leq\left\Vert F\right\Vert _{L^{1}(x_{0},\infty)}^{4n}
        \int_{x_{0}}^{\infty}\left(  1+x\right)^{2s}
        \left\vert F(x)\right\vert ^{2}~dx,
\label{eq:Gn.bound}%
\end{equation}
from which it follows that $\int_{0}^{\infty}\left(1+x\right)^{2s}
  \left\vert G(x)\right\vert^{2}~dx<\infty$. \ Let $f(x):=\left\vert
F(x)\right\vert $ and $\widetilde{f}(x):=\left(  1+x\right)  ^{2s}f^{2}(x)$.
Since $x\leq x+t_{1}$ in the range of integration for $G_{n}$, it follows from
the Cauchy--Schwarz inequality that%
\[
\int_{x_{0}}^{\infty}\left(  1+x\right)  ^{2s}\left\vert G_{n}(x)\right\vert
^{2}~dx\leq\int_{x_{0}}^{\infty}I_{n}(x)J_{n}(x)~dx,
\]
where%
\begin{equation}
I_{n}(x):=\int_{\mathbb{R}_{+}^{2n}}\widetilde{f}(x+t_{1})f(x+t_{1}%
+t_{2})\ldots f(x+t_{2n-1}+t_{2n})f(x+t_{2n})~\mathrm{dt}\label{eq:In}%
\end{equation}
and%
\begin{equation}
J_{n}(x):=\int_{\mathbb{R}_{+}^{2n}}f(x+t_{1}+t_{2})\ldots f(x+t_{2n-1}%
+t_{2n})f(x+t_{2n})~\mathrm{dt}.\label{eq:Jn}%
\end{equation}
Clearly,
\begin{equation}
J_{n}(x)\leq\left\Vert f\right\Vert _{L^{1}(x_{0},\infty)}^{2n}%
\label{eq:Jn.bound}%
\end{equation}
for $x\geq x_{0}$. In (\ref{eq:In}), set $y_{2k-1}=x+t_{2k-1}$ and
$y_{2k}=t_{2k}$ for $1\leq k\leq n$; then
\begin{multline*}
\int_{x_{0}}^{\infty}I_{n}(x)~dx=\\
\int_{x_{0}}^{\infty}\int_{0}^{\infty}\ldots\int_{0}^{\infty}\int_{x_{0}%
}^{\infty}\widetilde{f}(y_{1})f(y_{1}+y_{2})\ldots f(y_{2n-1}+y_{2n}%
)f(x+y_{2n})~\mathrm{dy}~dx
\end{multline*}
where $\mathrm{dy}:=dy_{1}\ldots dy_{2n}$. It follows easily that
\begin{equation}
\int_{x_{0}}^{\infty}I_{n}(x)~dx\leq\left(  \int_{x_{0}}^{\infty}\widetilde
{f}(x)~dx\right)  \left(  \int_{x_{0}}^{\infty}f(x)~dx\right)  ^{2n}%
.\label{eq:In.bound}%
\end{equation}
Combining (\ref{eq:In.bound}) and (\ref{eq:Jn.bound}) gives (\ref{eq:Gn.bound}).

Together with a similar analysis for $u^{\#}$ and $G^{\#}$, the above
arguments yield:

\begin{proposition}
\label{prop:inverse}Suppose that $r\in\mathcal{R}_{s}$ for $s>1/2$. Then $u\in
L^{2,s}(\mathbb{R}^{+})$ and $u^{\#}\in L^{2,s}(\mathbb{R}^{-})$, and the maps
$r\mapsto u$ and $r^{\#}\mapsto u^{\#}$ are locally Lipschitz continuous
respectively as maps $\mathcal{R}_{s}\rightarrow L^{2,s}(\mathbb{R}^{+})$ and
$\mathcal{R}_{s}\rightarrow L^{2,s}(\mathbb{R}^{-})$.
\end{proposition}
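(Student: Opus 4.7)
The plan is to assemble the ingredients already laid out in this section. By Plancherel's theorem for weighted $L^{2}$ spaces (which follows by writing $(1+|x|)^{s}$ as $\langle x\rangle^{s}$ and using the Fourier-side description of $H^{s}$), the map $r\mapsto F$ sends $H^{s}({\mathbb{R}})$ isometrically into $L^{2,s}({\mathbb{R}})$, so the contribution of $F$ to $u=F-G$ is immediately under control. What remains is to prove $G\in L^{2,s}({\mathbb{R}}^{+})$ and to verify local Lipschitz dependence on $r$; an entirely parallel argument will handle $G^{\#}$ and $u^{\#}$, using that $r\mapsto r^{\#}=\mathcal{I}_{s}r$ is continuous on $\mathcal{R}_{s}$ by the proposition preceding this section.

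First I would fix $x_{0}>0$ large enough that $\|F\|_{L^{1}(x_{0},\infty)}<1/2$, which is possible since $L^{2,s}({\mathbb{R}})\subset L^{1}({\mathbb{R}})$ for $s>1/2$, and moreover this $x_{0}$ can be chosen uniformly in a small $\mathcal{R}_{s}$-neighborhood of a given $r$. For $x>x_{0}$ the multilinear expansion $G=\sum_{n\geq 1}G_{n}$ converges, and the already-proved bound \eqref{eq:Gn.bound} can be rewritten as
\[
\|(1+x)^{s}G_{n}\|_{L^{2}(x_{0},\infty)} \leq \|F\|_{L^{1}(x_{0},\infty)}^{2n}\|(1+x)^{s}F\|_{L^{2}(x_{0},\infty)}.
\]
Summing the geometric series gives $\|(1+x)^{s}G\|_{L^{2}(x_{0},\infty)} \leq \tfrac{1}{3}\|(1+x)^{s}F\|_{L^{2}(x_{0},\infty)}$, so $G\in L^{2,s}(x_{0},\infty)$. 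On the bounded interval $[0,x_{0}]$ the weight $(1+x)^{2s}$ is bounded by $(1+x_{0})^{2s}$, and $G\in L^{2}({\mathbb{R}}^{+})$ was established in \cite{HMP:2010} (Proposition 4.2); hence $G\in L^{2,s}(0,x_{0})$ as well. Combining the two gives $u=F-G\in L^{2,s}({\mathbb{R}}^{+})$.

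For local Lipschitz continuity I would exploit the multilinear structure of the expansion. Each $G_{n}$ is a homogeneous polynomial of degree $2n+1$ in $F$; by polarization
\[
G_{n}(F_{1})-G_{n}(F_{2}) = \sum_{k=1}^{2n+1}\mathcal{G}_{n,k}(F_{1},\ldots,F_{1},F_{1}-F_{2},F_{2},\ldots,F_{2}),
\]
where each $\mathcal{G}_{n,k}$ satisfies the same type of bound as \eqref{eq:Gn.bound}. Uniform smallness of $\|F\|_{L^{1}(x_{0},\infty)}$ in a neighborhood of a fixed $F$ then makes the sum in $n$ converge geometrically, giving a Lipschitz estimate for $F\mapsto G|_{(x_{0},\infty)}$ into $L^{2,s}(x_{0},\infty)$. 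Combined with the isometric $r\mapsto F$ step and with the known $L^{2}$-Lipschitz dependence on $[0,x_{0}]$ from \cite{HMP:2010} (where the bounded weight costs only a factor $(1+x_{0})^{s}$), this yields the claimed local Lipschitz continuity of $r\mapsto u$. Running the analogous argument for $G^{\#}$ on $(-\infty,-x_{0})$ and on $[-x_{0},0]$ gives the statement for $u^{\#}$.

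The main obstacle was the weighted estimate \eqref{eq:Gn.bound} on the multilinear terms, and that was already dispatched in the body of the section by the clever Cauchy--Schwarz split isolating the single factor $\widetilde f(x+t_{1})$ carrying the weight. Once that bound is in hand, the proof is essentially bookkeeping: summing the geometric series, handling the compact interval $[0,x_{0}]$ separately via the unweighted result from \cite{HMP:2010}, and polarizing the multilinear expansions for Lipschitz continuity.
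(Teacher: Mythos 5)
Your proposal is correct and follows essentially the same route as the paper: the decomposition $u=F-G$, the choice of $x_{0}$ with $\Vert F\Vert_{L^{1}(x_{0},\infty)}<1/2$, the Neumann/multilinear expansion of $G$, and the weighted bound \eqref{eq:Gn.bound} are exactly the paper's argument, and your additional steps (summing the geometric series, handling $[0,x_{0}]$ via the bounded weight and the unweighted $L^{2}$ result of \cite{HMP:2010}, and polarizing the multilinear terms for the Lipschitz estimate) are precisely the routine details the paper leaves implicit. The only cosmetic inaccuracy is calling $r\mapsto F$ isometric --- because of the dilation $e^{2ikx}$ it is merely bounded with bounded inverse between $H^{s}(\mathbb{R})$ and $L^{2,s}(\mathbb{R})$ --- which does not affect the argument.
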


\section{Proof of the Main Theorem}

\label{sec:proof}

We now give the proof of Theorem \ref{thm:main}. Proposition \ref{prop:direct}
shows that $\mathcal{S}_{\pm}$ have range contained in $\mathcal{R}_{s}$ and
that $\mathcal{S}_{\pm}$ are locally Lipschitz continuous maps from
$L^{2,s}(\mathbb{R}^{-})\times L^{2,s}(\mathbb{R}^{+})\times(0,\infty)$ into
the space $\mathcal{R}_{s}$. On the other hand, given a reflection coefficient
$r\in\mathcal{R}_{s}$, Proposition \ref{prop:inverse} shows that the Riccati
representatives reconstructed from $r$ and $r^{\#}$ satisfy $u\in
L^{2,s}(\mathbb{R}^{+})$ and $u^{\#}\in L^{2,s}(\mathbb{R}^{-})$ and are
locally Lipschitz continuous as respective functions of $r$ and $r^{\#}$. It
follows from the analysis of section 4 in \cite{HMP:2010} that $u$ and
$u^{\#}$ are the unique right- and left-hand Riccati representatives of a
real-valued distribution $q\in H^{-1}(\mathbb{R})$ having reflection
coefficients $r$ and $r^{\#}$. This shows that $\mathcal{S}_{\pm}$ are onto
$\mathcal{R}_{s}$ and completes the proof of Theorem \ref{thm:main}.

\appendix

\section{$H^{s}({\mathbb{R}})$ as a Banach algebra}

\label{sec:app}

Throughout this appendix, we shall write $L^{p}$ and $H^{s}$ for the
spaces~$L^{p}({\mathbb{R}})$ and $H^{s}({\mathbb{R}})$, respectively. We refer
the reader to the book by Runst and Sickel \cite{RS:1996} for the properties
of the Sobolev spaces $H^{s}$ and to the book by Rudin \cite{Rudin:1973} for
the basic notions of the Banach algebras.

For $s>\tfrac{1}{2}$, the space $H^{s}$ is a closed algebra with respect to
pointwise addition and multiplication. Thus, upon introducing an equivalent
norm, $H^{s}$ becomes a Banach algebra. We denote by $1\dotplus H^{s}$ the
extension of $H^{s}$ to a unital algebra; $1\dotplus H^{s}$ consists of
functions of the form $g:=c\cdot1+f$ with $c\in\mathbb{C}$ and $f\in H^{s}$.
We recall that the \emph{spectrum} $\sigma(g)$ of an element $g\in1\dotplus
H^{s}$ is the set of all $\lambda\in\mathbb{C}$ such that $g-\lambda\cdot1$ is
not invertible in~$1\dotplus H^{s}$.

\begin{lemma}
\label{lem:spectrum} Assume that $s>\tfrac12$. Then for every $g \in1 \dotplus
H^{s}$, the spectrum $\sigma(g)$ is contained in the closure $\overline
{\operatorname{ran}g}$ of the range $\operatorname{ran}g$.
\end{lemma}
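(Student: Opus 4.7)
The strategy is to prove the contrapositive: given $\lambda \in \mathbb{C}\setminus\overline{\operatorname{ran}(g)}$, I will exhibit an inverse of $g - \lambda\cdot 1$ in $1\dotplus H^{s}$. Write $g = c\cdot 1 + f$ with $c\in\mathbb{C}$ and $f\in H^{s}$. Since $s>1/2$, the embedding $H^{s}\hookrightarrow C_{0}(\mathbb{R})$ implies that $g$ is continuous and $g(x)\to c$ as $|x|\to\infty$, so $\overline{\operatorname{ran}(g)}$ is compact and contains $c$. Setting $\alpha := c-\lambda$, we have $\alpha\neq 0$ and $\delta := \inf_{x\in\mathbb{R}}|\alpha+f(x)| = \operatorname{dist}(\lambda,\overline{\operatorname{ran}(g)}) > 0$.

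The natural candidate for the inverse is $h := (g-\lambda)^{-1} = (\alpha+f)^{-1}$. Since $h(x)\to 1/\alpha$ at infinity, I would write $h = 1/\alpha + k$ and try to show that
\[
k \;=\; \frac{1}{\alpha+f} - \frac{1}{\alpha} \;=\; -\frac{f}{\alpha(\alpha+f)}
\]
belongs to $H^{s}$. Introducing $\varphi(z) := 1/(\alpha+z) - 1/\alpha$, this reduces to verifying that $\varphi\circ f \in H^{s}$. The function $\varphi$ is holomorphic on $\mathbb{C}\setminus\{-\alpha\}$ and satisfies $\varphi(0)=0$; moreover $\operatorname{dist}(-\alpha,\overline{\operatorname{ran}(f)})\geq\delta$, because $f(x_{0}) = -\alpha$ would force $g(x_{0}) - \lambda = 0$, contradicting the lower bound on $|g-\lambda|$. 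Consequently $\varphi$ is smooth on a neighborhood of the compact set $\overline{\operatorname{ran}(f)}$, and a smooth cutoff produces $\tilde{\varphi}\in C^{\infty}(\mathbb{C})$ with $\tilde{\varphi}(0)=0$ that agrees with $\varphi$ on $\overline{\operatorname{ran}(f)}$, so $\varphi\circ f = \tilde{\varphi}\circ f$.

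The main technical input is then a standard Moser-type composition estimate for Sobolev algebras: for $s>1/2$, if $F\in C^{\infty}(\mathbb{C})$ with $F(0)=0$ and $f\in H^{s}(\mathbb{R})$, then $F\circ f\in H^{s}(\mathbb{R})$ (see, e.g., Runst--Sickel). Applying this with $F=\tilde{\varphi}$ gives $\varphi\circ f \in H^{s}$, hence $h\in 1\dotplus H^{s}$, so $g-\lambda\cdot 1$ is invertible and $\lambda\notin\sigma(g)$. The composition estimate is the crux of the proof: a direct Neumann-series argument fails because the $H^{s}$-norm of $f$ need not be small even when $\|f\|_{\infty}$ is well controlled, so one cannot simply expand $(\alpha+f)^{-1}$ as a convergent geometric series in $1\dotplus H^{s}$. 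As an alternative route to the same conclusion, one could appeal to Gelfand theory and identify the maximal ideal space of $1\dotplus H^{s}$ with $\mathbb{R}\cup\{\infty\}$, the characters being the point evaluations $\chi_{x}(c\cdot 1+f)=c+f(x)$ together with $\chi_{\infty}(c\cdot 1+f)=c$; then $\sigma(g) = \operatorname{ran}(g)\cup\{c\} \subset \overline{\operatorname{ran}(g)}$, but the corresponding obstacle is verifying that these evaluations exhaust the characters, which again uses the regularity of $H^{s}$ as a Banach function algebra.
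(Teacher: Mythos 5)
Your argument is correct, but it reaches the conclusion by a different route than the paper. You reduce invertibility of $g-\lambda\cdot 1$ to the statement that $\varphi\circ f\in H^{s}$ for a function $\varphi$ that is smooth on a neighborhood of the compact set $\overline{\operatorname{ran}f}$ and vanishes at $0$, and then invoke the general Nemytskij/Moser composition theorem for $H^{s}$ with $s>1/2$ (which is indeed in Runst--Sickel, the reference the paper itself cites; note only that since $f$ is complex-valued one applies the theorem to $\tilde{\varphi}$ viewed as a smooth map of $\mathbb{R}^{2}$, which causes no difficulty). The paper instead proves the needed invertibility by hand, in three steps keyed to the value of $s$: for $s\in(\tfrac12,1)$ it uses the Gagliardo double-integral characterization of $H^{s}$ together with the algebraic identity $\frac{1}{g(x)}-\frac{1}{g(y)}=\frac{f(y)-f(x)}{g(x)g(y)}$ and the uniform bound $\|1/g\|_{L^{\infty}}<\infty$; for $s=1$ it differentiates $1/g$ directly; and for $s=n+\alpha$ it bootstraps from these two cases. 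Your version is shorter and immediately yields the statement for all $s>\tfrac12$ at once, at the cost of importing a substantially heavier external theorem; the paper's version is elementary and self-contained, exposing exactly which structural facts about $H^{s}$ (the seminorm characterization and the algebra property) are being used. Your preliminary reductions are all sound: $c\in\overline{\operatorname{ran}g}$ because $H^{s}\hookrightarrow C_{0}$ for $s>\tfrac12$, so $\alpha=c-\lambda\neq0$; and $\delta=\operatorname{dist}(\lambda,\overline{\operatorname{ran}g})>0$ gives both the uniform lower bound on $|g-\lambda|$ and the room needed for the smooth cutoff $\tilde{\varphi}$. The Gelfand-theoretic alternative you sketch is a correct description of the maximal ideal space, but, as you acknowledge, establishing that the point evaluations exhaust the characters is equivalent in difficulty to the invertibility statement itself, so it is rightly left as a remark rather than a proof.
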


\begin{proof}
It suffices to prove the implication
\[
g\in1\dotplus H^{s}\quad\text{and}\quad0\not \in \overline{\operatorname{ran}%
g}\implies\frac{1}{g}\in1\dotplus H^{s}.
\]
Without loss of generality, we may assume that $g=1+f$ with $f\in H^{s}$.
Also, we set $C:=\Vert1/g\Vert_{L^{\infty}}$.

Consider first the case~$s \in(\tfrac12,1)$. Recall that then $\phi\in L^{2}$
belongs to $H^{s}$ if and only if
\[
\int_{\mathbb{R}} \int_{\mathbb{R}} \frac{|\phi(x) - \phi(y)|^{2}%
}{|x-y|^{1+2s}}\,dx\,dy < \infty.
\]
Setting
\[
\phi(x) = \frac1{g(x)}-1 = - \frac{f(x)}{g(x)} \in L^{2}
\]
and observing that
\[
|\phi(x) - \phi(y)| \le C^{2} |f(x)-f(y)|,
\]
we easily conclude that $\phi\in H^{s}$.

Next, for~$s=1$ we find that
\[
\Bigl(\frac1g\Bigr)^{\prime}= - \frac{f^{\prime}}{g^{2}} \in L^{2},
\qquad\frac1g -1 = - \frac{f}{g} \in L^{2},
\]
so that $1/g \in1 \dotplus H^{s}$.

Finally, let $s=n+\alpha$, where $n\in\mathbb{N}$ and $\alpha\in(0,1)$. Then
\[
\Bigl(\frac{1}{g}\Bigr)^{(n)}=-\frac{f^{(n)}}{g^{2}}+\psi
\]
where $\psi\in H^{1}$. Since $f^{(n)}\in H^{\alpha}$ and $1/g^{2}\in H^{1}$ by
the above, we conclude that $(1/g)^{(n)}\in H^{\alpha}({\mathbb{R}})$. Hence
$1/g\in1\dotplus H^{s}$, and the proof is complete.
\end{proof}

We now have the following analogue of the Wiener--Levy theorem for the
algebra~$1\dotplus H^{s}$.

\begin{corollary}
Assume that $\Omega$ is an open subset in~$\mathbb{C}$ and that $\phi$ is a
complex-valued function that is analytic on~$\Omega$. Denote by $M_{\Omega}$
the set of all elements~$g$ of $1 \dotplus H^{s}$ such that $\overline
{\operatorname{ran}g} \subset\Omega$. Then, for every $g \in M_{\Omega}$, the
composition $\phi\circ g$ belongs to $1 \dotplus H^{s}$ and the mapping
\[
M_{\Omega}\ni g \mapsto\phi\circ g \in1 \dotplus H^{s}
\]
is locally Lipschitz continuous.
\end{corollary}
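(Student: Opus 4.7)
The plan is to invoke the holomorphic functional calculus for the commutative unital Banach algebra $1\dotplus H^{s}$, with Lemma~\ref{lem:spectrum} providing the essential containment of spectra inside~$\Omega$. Fix $g_{0}\in M_{\Omega}$ and write $g_{0}=c\cdot 1+f_{0}$ with $f_{0}\in H^{s}$. Since $s>\tfrac{1}{2}$, the Sobolev embedding $H^{s}\subset C_{0}(\mathbb{R})$ shows that $g_{0}$ is continuous with $g_{0}(x)\to c$ as $|x|\to\infty$, so $\overline{\operatorname{ran}g_{0}}$ is a \emph{compact} subset of $\Omega$.

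Choose a bounded open set $U$ with $\overline{\operatorname{ran}g_{0}}\subset U$ and $\overline{U}\subset\Omega$, so that $\Gamma:=\partial U$ is a finite disjoint union of piecewise smooth, positively oriented Jordan curves. Using $H^{s}\hookrightarrow L^{\infty}$, one can find a neighborhood $V$ of $g_{0}$ in $1\dotplus H^{s}$ for which $\overline{\operatorname{ran}g}\subset U$ whenever $g\in V$. By Lemma~\ref{lem:spectrum}, for every $\lambda\in\Gamma$ and every $g\in V$ the element $\lambda\cdot 1-g$ is invertible in $1\dotplus H^{s}$. Since the invertible elements form an open set and inversion is continuous in any Banach algebra, a compactness argument on~$\Gamma$ (shrinking $V$ if necessary) yields the uniform bound
\[
M:=\sup_{\lambda\in\Gamma,\,g\in V}\bigl\|(\lambda\cdot 1-g)^{-1}\bigr\|_{1\dotplus H^{s}}<\infty.
\]

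Define the Bochner integral
\[
\Phi(g):=\frac{1}{2\pi i}\oint_{\Gamma}\phi(\lambda)\,(\lambda\cdot 1-g)^{-1}\,d\lambda\in 1\dotplus H^{s}
\]
for $g\in V$. Because evaluation at any $x\in\mathbb{R}$ is a continuous linear functional on $1\dotplus H^{s}$, it commutes with the integral, and since $g(x)\in\overline{\operatorname{ran}g}\subset U$ lies inside $\Gamma$, Cauchy's formula gives $\Phi(g)(x)=\phi(g(x))$. Hence $\phi\circ g=\Phi(g)\in 1\dotplus H^{s}$, establishing the membership claim.

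For local Lipschitz continuity, the resolvent identity
\[
(\lambda\cdot 1-g_{1})^{-1}-(\lambda\cdot 1-g_{2})^{-1}=(\lambda\cdot 1-g_{1})^{-1}(g_{1}-g_{2})(\lambda\cdot 1-g_{2})^{-1}
\]
together with the Banach-algebra structure gives
\[
\|\phi\circ g_{1}-\phi\circ g_{2}\|_{1\dotplus H^{s}}\leq\frac{M^{2}\,|\Gamma|}{2\pi}\Bigl(\sup_{\lambda\in\Gamma}|\phi(\lambda)|\Bigr)\,\|g_{1}-g_{2}\|_{1\dotplus H^{s}}
\]
for all $g_{1},g_{2}\in V$. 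The principal technical point is producing the uniform resolvent bound $M$; it reduces to Lemma~\ref{lem:spectrum} (which supplies pointwise invertibility on $\Gamma$) followed by a routine compactness-plus-continuity argument for Banach-algebra inversion.
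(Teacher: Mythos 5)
Your proof is correct, and it is precisely the standard Riesz--Dunford (holomorphic functional calculus) argument that the paper implicitly relies on: the paper states this corollary without proof, deriving it from Lemma~\ref{lem:spectrum} and the Banach-algebra background it cites from Rudin, and your contour-integral construction together with the Neumann-series perturbation giving the uniform resolvent bound $M$ is exactly how that derivation goes. The only point worth making explicit is that evaluation at each $x\in\mathbb{R}$ is a \emph{multiplicative} linear functional on $1\dotplus H^{s}$ (so that $\bigl((\lambda\cdot 1-g)^{-1}\bigr)(x)=(\lambda-g(x))^{-1}$), which is what licenses the application of Cauchy's formula pointwise; you use this implicitly and it is valid since $H^{s}\hookrightarrow C_{0}(\mathbb{R})$ for $s>\tfrac12$.
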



\begin{thebibliography}{99}                                                                                               %
\bibitem {AKNS:1974}Ablowitz, M. J.; Kaup, D. J.; Newell, A. C.; Segur, H. The
inverse scattering transform-Fourier analysis for nonlinear problems.
\emph{Studies in Appl. Math.} \textbf{53} (1974), no. 4, 249--315.

\bibitem {Cohen:1982}Cohen, A. Decay and regularity in the inverse scattering
problem. \emph{J. Math. Anal. Appl.} \textbf{87} (1982), no. 2, 395--426.

\bibitem {DT:1979}Deift, P.; Trubowitz, E. Inverse scattering on the line.
\emph{Comm. Pure Appl. Math.} \textbf{32} (1979), no. 2, 121--251.

\bibitem {Faddeev:1959}Faddeev, L. D. The inverse problem in the quantum
theory of scattering. \emph{Uspehi Mat. Nauk} \textbf{14} (1959), no. 4 (88), 57--119.

\bibitem {Faddeev:1964}Faddeev, L. D. Properties of the $S$-matrix of the
one-dimensional Schr\"{o}dinger equation. \emph{Trudy Mat. Inst. Steklov.}
\textbf{73} (1964) 314--336.

\bibitem {Frayer:2008}Frayer, C. \emph{Scattering with singular Miura
potentials on the line}. Doctoral thesis, University of Kentucky, 2008.

\bibitem {FHMP:2009}Frayer, C.; Hryniv, R.; Mykytyuk, Ya. V.; Perry, P.
Scattering Theory for Miura Potentials, I. \emph{Inverse Problems} \textbf{25}
(2009), no.~11, 115007 (25pp).

\bibitem {HMP:2010}Hryniv, R.; Mykytyuk, Ya. V.; Perry, P. Inverse scattering
on the line for Schr\"{o}dinger operators with Miura potentials, II.
\ Different Riccati representatives. \texttt{arXiv.math:0910.5028}.

\bibitem {HMP:2011}Hryniv, R.; Mykytyuk, Ya. V.; Perry, P. Solution of the
KdV\ and NLS equations in spaces of low regularity by the inverse scattering
method. In preparation.

\bibitem {KPST:2005}Kappeler, T.; Perry, P.; Shubin, M.; Topalov, P. The Miura
map on the line. \emph{Int. Math. Res. Not.} \textbf{2005}, no. 50, 3091--3133.

\bibitem {KT:2003}Kappeler, T.; Topalov, P. Riccati representation for
elements in $H^{-1}(\mathbb{T})$ and its applications. \emph{Pliska Stud.
Math. Bulgar.} \textbf{15} (2003), 171--188.

\bibitem {KT:2005a}Kappeler, T.; Topalov, P. Global well-posedness of mKdV in
$L^{2}(\mathbb{T},\mathbb{R})$. \emph{Comm. Partial Differential Equations}
\textbf{30} (2005), no. 1--3, 435--449.

\bibitem {KT:2005b}Kappeler, T.; Topalov, P. Riccati map on $L_{0}%
^{2}(\mathbb{T})$ and its applications. \emph{J. Math. Anal. Appl.}
\textbf{309} (2005), no.~2, 544--566.

\bibitem {KT:2006}Kappeler, T.; Topalov, P. Global wellposedness of KdV in
$H^{-1}(\mathbb{T},\mathbb{R})$. \emph{Duke Math. J.} \textbf{135} (2006),
no.~2, 327--360.

\bibitem {KT:1986}Kappeler, T.; Trubowitz, E. Properties of the scattering
map. \emph{Comment. Math. Helv.} \textbf{61} (1986), no.~3, 442--480.

\bibitem {KT:1988}Kappeler, T.; Trubowitz, E. Properties of the scattering
map. II. \emph{Comment. Math. Helv.} \textbf{63} (1988), no. 1, 150--167.

\bibitem {Marchenko:1955}Marchenko, V. A. On reconstruction of the potential
energy from phases of the scattered waves. (Russian) \emph{Dokl. Akad. Nauk
SSSR (N.S.)} \textbf{104} (1955), 695--698.

\bibitem {Miura:1968}Miura, R. M. Korteweg-de Vries equation and
generalizations, I: a remarkable explicit nonlinear transformation. \emph{J.
Math. Phys.} \textbf{9} (1968), 1202-1204.

\bibitem {Rudin:1973}Rudin, W. \emph{Functional analysis.} McGraw-Hill Series
in Higher Mathematics. McGraw-Hill Book Co., New
York-D\"{u}sseldorf-Johannesburg, 1973.

\bibitem {RS:1996}Runst, T.; Sickel, W. Sobolev spaces of fractional order,
Nemytskij operators, and nonlinear partial differential equations. (English
summary) \ de Gruyter Series in Nonlinear Analysis and Applications, 3. Walter
de Gruyter \& Co., Berlin, 1996.

\bibitem {ZS:1971}Zakharov, V. E.; Shabat, A. B. Exact theory of
two-dimensional self-focusing and one-dimensional self-modulation of waves in
nonlinear media. \emph{Soviet Physics JETP} \textbf{34} (1972), no. 1,
62--69.; translated from \emph{\`{E}ksper. Teoret. Fiz.} \textbf{61} (1971),
no. 1, 118--134 (Russian).

\bibitem {Zhou:1998}Zhou, X. $L^{2}$-Sobolev space bijectivity of the
scattering and inverse scattering transforms. \emph{Comm. Pure Appl. Math.}
\textbf{51} (1998), no. 7, 697--731.
\end{thebibliography}
\end{document}